\documentclass{article}


\usepackage[utf8]{inputenc}
\usepackage{amsmath,amssymb
,amsthm
}
\usepackage[english]{babel}
\usepackage[ruled,vlined,titlenumbered,linesnumbered,longend]{algorithm2e}
\usepackage{tikz}
\usetikzlibrary{trees,snakes,shapes}
\usepackage{walk}

\DeclareMathOperator{\MSet}{\textsc{MSet}}

\def\cM{\mathcal{M}}

\def\cA{\mathcal{A}}

\def\eps{\varepsilon}

\newtheorem{theorem}{Theorem}[section]
\newtheorem{lemma}[theorem]{Lemma}
\newtheorem{corollary}[theorem]{Corollary}
\newtheorem{proposition}[theorem]{Proposition}

\newtheorem{definition}[theorem]{Definition}
\newtheorem{remark}[theorem]{Remark}

\title{Asymptotic Analysis and Random Sampling of Digitally Convex Polyominoes}
\author{O. Bodini%
\thanks{LIPN, Université Paris 13, 99, avenue Jean-Baptiste Clément 93430
Villetaneuse, France}
\and Ph. Duchon%
\thanks{LaBRI, 351, cours de la Libération F-33405 Talence cedex, France}
\and A. Jacquot%
\thanks{LIPN, Université Paris 13, 99, avenue Jean-Baptiste Clément 
93430 Villetaneuse, France}
\and L. Mutafchiev%
\thanks{Institute of Mathematics and Informatics Acad. Georgi Bonchev Str.,
Block 8 1113 Sofia, BULGARIA}
}

\begin{document}
\maketitle

\begin{abstract}
Recent work of Brlek \textit{et al.} gives a characterization of digitally
convex polyominoes using combinatorics on words. From this work,
we derive a combinatorial symbolic description of digitally convex polyominoes
and use it to analyze their limit properties and  build a uniform
sampler. Experimentally,
our sampler shows a limit shape for large digitally convex
polyominoes.
\end{abstract}

\section*{Introduction}


In discrete geometry, a finite set of unit square cells is said to be
a \emph{digitally convex polyomino}\footnote{The usual definition of a
  polyomino requires the set to be connected, whereas digitally convex
  sets may be disconnected. 
However, one can coherently define some
  polygon as the boundary of any digitally convex polyomino; in the case
  of disconnected sets, this boundary will not be self-avoiding.

} if
it is exactly the set of unit cells included in a convex region of the
plane. We only consider
digitally convex polyominoes up to translation.
The \emph{perimeter} of a digitally convex polyomino is that of
the smallest rectangular box that contains it.

The notion of digitally convex polyominoes arises naturally in the context of curvature estimators and consequently in the important field of pattern recognition \cite{KLN2008,CMT2001}.

Brlek \emph{et al.} \cite{Brlek09a} described a characterization of digitally convex polyominoes,
in terms of words coding their contour.
In this paper, we reformulate this characterization in the context of constructible combinatorial classes 
and we use it to build and analyze an algorithm to randomly sample digitally convex polyominoes.

Our algorithm, based on a model of parametrized samplers, called Boltzmann samplers \cite{DuFlLoSc04}, draws
digitally convex polyominoes at random. Although all
possible digitally convex polyominoes have positive probability,
the perimeter of the randomly generated polyomino being itself
random, two different structures with the same perimeter
appear with the same probability. Moreover, an appropriate choice of the
tuning parameter allows the user to adjust the random model,
typically in order to generate \emph{large} structures. 
We present also in this paper how to tune this parameter. 

The Boltzmann model of random sampling, as introduced
in~\cite{DuFlLoSc04}, is a general method for the random generation of
discrete combinatorial structures where, for some real parameter
$x>0$, each possible structure $\gamma$, with (integer) size
$|\gamma|$, is obtained with probability proportional to
$x^{|\gamma|}$.
A \emph{Boltzmann sampler} for a
combinatorial class $\mathcal{C}$ is a randomized algorithm that takes
as input a parameter $x$, and outputs a random element of
$\mathcal{C}$ according to the Boltzmann distribution with parameter $x$.

Samples obtained via this generator suggested that large random quarters of digitally convex polyominoes exhibit a limit shape. 
We identify and prove this limit shape in Section \ref{limitshape}

The first
section is dedicated to introducing the characterization of Brlek
\emph{et al} \cite{Brlek09a} in the framework of symbolic methods. In Section \ref{limitshape},
we analyze asymptotic properties of quarters of digitally convex polyominoes. Finally, we give 
 in Section \ref{section3} the samplers for digitally convex polyominoes and
some analysis for the complexity of the sampling. 

\section{Characterization of digitally convex polygons}
The goal of this section is to recall (without proofs) the characterization by Brlek, Lachaud,
Proven\c{c}al, Reutenauer \cite{Brlek09a} of digitally convex
polyominoes and recast it in terms of the symbolic method. This characterization is the starting point to
efficiently sample large digitally convex polyominoes,
and is thus needed in the next chapters. 


\subsection{Digitally convex polyominoes}
%
%
\begin{definition}
A \emph{digitally convex polyomino}, DCP for short, is the set of all cells of $\mathbb{Z}^2$
included in a bounded convex region of the plane.
\end{definition}

\begin{figure}
\begin{center}
\includegraphics[scale=1]{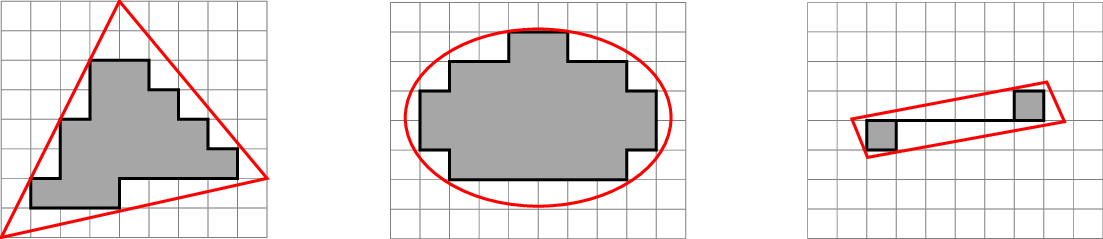}
\end{center}
\caption{A few digitally convex polyominoes (in grey) of perimeter $24$,$26$ and $16$, and their contour (in black) }
\label{polex}
\end{figure}
A first geometrical characterization directly follows from the definition: a set of cells of the
square lattice $P$ is a digitally convex polyomino if all cells included in the convex hull of $P$
is in $P$.


For our propose, a DCP will be rather 
 characterized through its \textit{contour}.
\begin{definition}
The \emph{contour} of a polyomino $P$ is the closed non-crossing path on the square lattice with no half turn allowed such that its interior is $P$.
In the case where $P$ is not connected, we take as the contour, the only such path which stays inside the convex hull of $P$.
\end{definition}

We define the \emph{perimeter} of $P$ to be the length of the
contour (note that, for digitally convex poyominoes, it is equal to the perimeter of the
smallest rectangular box that contains $P$).

The contour of DCP can be decomposed into four specifiable sub-paths through the standard decomposition of polyominoes.


\medskip

The standard decomposition of a polyomino distinguishes four extremal points:
\begin{itemize}
\item{$W$ is the lowest point on the leftmost side}
\item{$N$ is the leftmost point in the top side}
\item{$E$ is the highest point on the rightmost side}
\item{$S$ is the rightmost point on the bottom side}
\end{itemize}
\begin{figure}
\includegraphics[scale=1]{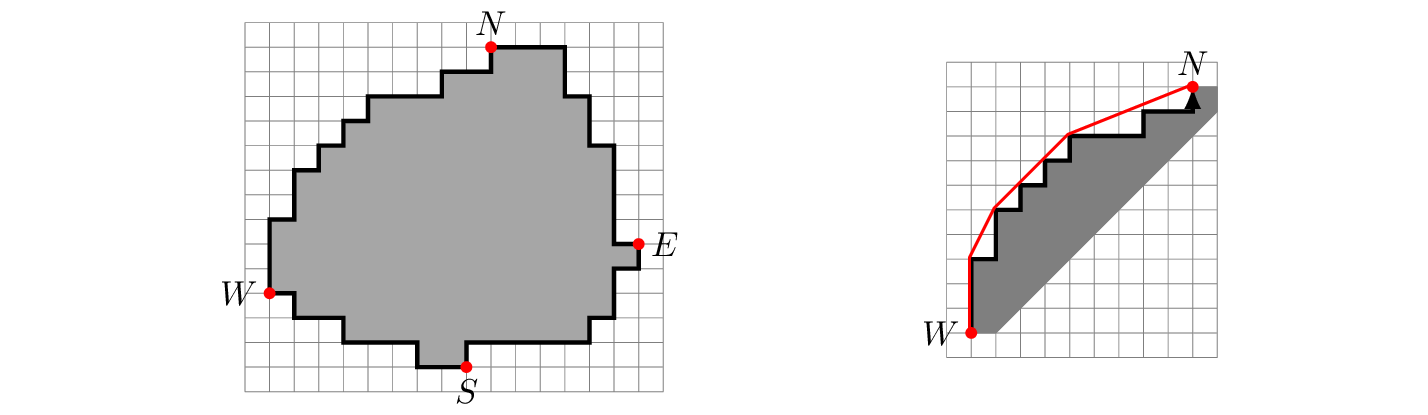}
\caption{A polyomino composed of $4$ NW-convex. The $W$ to $N$ NW-convex path is coded by $w=1110110101010001001$.}
\label{NESW}
\end{figure}

The contour of a DCP is then the union of the four (clockwise) paths
$WN$, $NE$, $ES$ and $SW$. Rotating the latter three paths by,
respectively,
a quarter turn, a half turn, three quarter turn counterclockwise leaves
all paths containing only north and east steps; digital convexity is
characterized by the fact that each (rotated) side is
\emph{NW-convex}.

\begin{definition}
A path $p$ is \emph{NW-convex} if it begins and ends by a north step and
if there is no cell between $p$ and its upper convex hull (see Fig.~\ref{NESW}).
\end{definition}

In the following, we mainly focus on the characterization and random sampling of NW-convex paths.

\subsection{Words}
The characterization in \cite{Brlek09a} is based on
combinatorics on words. Let us recall some classical notations.
We are interested in words on the alphabet $\{0,1\}$.  Thus
$\{0,1\}^{*}$ is the set of all words, $\{0,1\}^{+}$ is the set of
non-empty words. 

Each NW-path can be bijectively coded by a word $w\in\{0,1\}^*$. From $W$ to $N$, the letter $0$ encodes a horizontal (east) step and $1$ a
vertical (north) step (see figure \ref{NESW}).

The idea is to decompose a NW-convex path $w$ by contacts between $w$ and its convex hull.
%
%
%

\begin{definition}
Let $p$, $q$ be two integers, with $p\geq 0$ and  $q> 0$. The Christoffel word associated to $p$, $q$ is the word which codes the highest path going from $(0,0)$ to $(p,q)$ while staying under the line going from $(0,0)$ to $(p,q)$. 
A Christoffel word is primitive if $p$ and $q$ are coprime. 
\end{definition}
\vspace{-0.6cm}
\begin{figure}
\begin{center}
\includegraphics[scale=1]{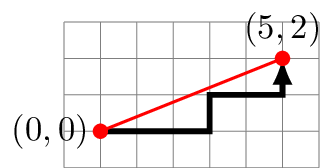}
%
\end{center}
\caption{The Christoffel primitive word 0001001, of slope $2/5$.}
\label{dessinchrist}
\end{figure}
\vspace{-0.4cm}
Note that a Christoffel primitive word always ends with $1$. 
%

\subsection{Symbolic characterization of NW-convex paths}

Let us recall in this section, two basic notions in analytic combinatorics: the combinatorial classes and the enumerative generating functions.

A \emph{combinatorial class} is a finite or countable set
$\mathcal{C}$, together with a \emph{size} function $|.|:
\mathcal{C}\mapsto \mathbb{N}$ such that, for all $n\in\mathbb{N}$,
only a finite number $c_n$ of elements of $\mathcal{C}$ have size $n$.
The (ordinary) \emph{generating function} $C(z)$ for the class
$\mathcal{C}$ is the formal power series $C(z) = \sum_{n} c_n z^n =
\sum_{\gamma\in\mathcal{C}} z^{|\gamma|}$. If $C(z)$ has a positive
(possibly infinite) convergence radius $\rho$ (which is equivalent to
the condition that $\overline{\lim}\, c_n^{1/n} < \infty$), standard
theorems in analysis imply that the power series $C(z)$ converges and
defines an analytic function in the complex domain $|z|<\rho$; here we
will only use the fact that $C(z)$ is defined for real $0<z<\rho$.


Our sampler is based on the following decomposition theorem. 

\begin{theorem}\cite{Brlek09a}
A word is NW-convex if and only if it is a sequence of Christoffel primitive words of decreasing slope, beginning with $1$.
\end{theorem}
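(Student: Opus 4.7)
The plan is to use the upper convex hull of an NW-convex path as the geometric object that mediates between the path itself and its decomposition as a sequence of Christoffel words. The main bridge is the defining property of Christoffel words: a primitive Christoffel word of slope $q/p$ encodes the unique highest lattice path from $(0,0)$ to $(p,q)$ staying weakly below the segment of that slope, and primitivity ($\gcd(p,q)=1$) prevents interior lattice points from sitting on this segment.

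For the ($\Leftarrow$) direction, I will concatenate the given primitive Christoffel words $c_1,\ldots,c_k$ end-to-end into a single path $\pi$, and glue the corresponding line segments into a piecewise-linear curve $H$ from $(0,0)$ to the final endpoint. Since the slopes decrease, $H$ is concave and coincides with its own upper boundary. Each $c_i$ lies weakly below its segment, so $\pi$ lies weakly below $H$; and because the breakpoints of $H$ lie on $\pi$, the curve $H$ is exactly the upper convex hull of $\pi$. The fact that each $c_i$ is itself the \emph{highest} lattice path under its segment then rules out any unit cell fitting between $\pi$ and $H$. Finally, $\pi$ begins with $1$ by hypothesis and ends with $1$ because every primitive Christoffel word does, so $\pi$ is NW-convex.

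For the ($\Rightarrow$) direction, I do the reverse: given an NW-convex path $\pi$, let $P_0=(0,0),P_1,\ldots,P_k$ be the ordered list of all lattice points at which $\pi$ meets its upper convex hull $H$. Between two consecutive contacts $P_i$ and $P_{i+1}$, the sub-path lies weakly below the segment $P_iP_{i+1}$, and the no-cell hypothesis forces it to be the highest such sub-path — precisely a Christoffel word with displacement $P_{i+1}-P_i$. Concavity of $H$ gives decreasing slopes, and the first sub-word starts with $1$ because $\pi$ does.

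The step I expect to be the main obstacle is showing \emph{primitivity} of each Christoffel piece in the $(\Rightarrow)$ direction. If $P_{i+1}-P_i = d(p',q')$ with $d \geq 2$, then the highest lattice path under the segment passes through every intermediate lattice point $P_i + j(p',q')$ for $0 < j < d$, and each of these would then lie on both $\pi$ and $H$, hence be an additional contact point lying strictly between $P_i$ and $P_{i+1}$ — contradicting the choice of $P_i$ and $P_{i+1}$ as \emph{consecutive} contacts. Primitivity therefore emerges from taking the decomposition at \emph{all} contact points, not merely at the extreme vertices of the hull; the same observation ensures, in the converse direction, that no cells sneak in at the junctions between consecutive pieces.
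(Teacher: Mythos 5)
Your proof is correct; the paper itself states this theorem without proof (it is recalled from Brlek \emph{et al.}), but your argument --- decomposing an NW-convex path at \emph{all} lattice points of contact with its upper convex hull, using the no-cell condition column by column to identify each piece as the highest path under its chord, and deriving primitivity from the observation that an imprimitive displacement would force intermediate contact points --- is exactly the decomposition the paper alludes to (``decompose a NW-convex path $w$ by contacts between $w$ and its convex hull'') and matches the standard proof in the cited reference. Both directions, including the gluing argument for the converse, are sound.
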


The reason of a NW-convex path to begin with a vertical step is to
avoid half-turn on the contour of a polyomino. Indeed, since all
Christoffel primitive words end with $1$, this ensures compatibility
with the standard decomposition (beginning and ending on a
corner).  Since the Christoffel primitive words appear in
decreasing order, NW-convex paths can be identified with multisets
of Christoffel primitive words, with the condition that the word
$1$ appears at least once in the multiset (this condition can be
removed by removing the initial vertical step from NW-convex
paths).  This is the description we will use in what follows
next.

The generating function of Christoffel primitive words, counted by
their lengths, is $\sum\limits_{n\geq1}\varphi(n)z^n$, with
$\varphi$ the Euler's totient function. It follows \cite[p.~29]{Flajolet2009}
that the
generating function of the class $\mathcal{S}$ of NW-convex paths,
counted by length, is $S(z)=\prod\limits_{n=1}^{\infty}
(1-z^{n})^{-\varphi(n)}$. More precisely, we can use the 3-variate generating function 
$S(z,h,v)=(1-zv)^{-1}\prod\limits_{n=2}^{\infty}\prod\limits_{p+q=n, p\wedge q=1}
(1-z^{n}v^ph^q)^{-1}$, to describe by $[z^nh^iv^j]S(z,h,v)$ the number of NW-convex paths beginning in $(0,0)$ and terminating in position $(i,j)$.


\section{Asymptotics for NW-convex paths and its limit shape}
\label{limitshape}

This section is dedicated to the analysis of some properties of NW-convex paths. 
The main objective is to describe a limit shape for the normalized random NW-convex paths. This is obtained in three steps. 
In the first one we extract the asymptotic of NW-convex paths using a Mellin transform approach. 
In the second one, using the same approach we prove that the asymptotic of the average number of initial vertical steps of a NW-convex path is
 in $O(\sqrt[3]{n})$. Then, using some technical lemmas, we conclude with the fact that the limit shape
 is $\sqrt{2z}-z$ with $0\leq z \leq 1/2.$ Let us begin by a brief overview on Mellin transforms. For more details, see \cite{Flajolet2009}.

\subsection{Brief overview on Mellin  transforms}
\label{MellinOverview}
The Mellin transform is an integral transform similar to Laplace
transform from which we can derive asymptotic estimates of
expressions involving specific infinite products or sums.  Given a
continuous function $f$ defined on $\mathbb{R}^+$, the
\emph{Mellin transform} of $f$ is the function
\begin{equation}\label{eq:def_mellin}
\cM[f](s):=\int_0^{\infty}f(t)t^{s-1}\mathrm{d}t.
\end{equation}

If $f(t)=O(t^{-a})$ as $t\to 0^+$ and $f(t)=O(t^{-b})$ as
$t\to+\infty$, then $\cM[f](s)$ is an analytic function defined on the
\emph{fundamental strip} $a<\mathrm{Re}(s)<b$. In addition, $\cM[f](s)$
is in most cases continuable to a meromorphic function in the whole
complex plane.

The first fundamental property of the Mellin transform is that
it \emph{factorizes} harmonic sums as follows:
\begin{equation}
\label{eq:fac}
G(t)=\sum_{k\geq 1}a_kf(\mu_kt)\ \Rightarrow\ \cM[G](s)=\Big(\sum_{k\geq 1}a_k\mu_k^{-s}\Big) \cM[f](s).
\end{equation}

In a similar way as the Laplace transform, the Mellin transform is almost involutive,
the function $f(t)$ being recovered from $\cM[f](s)$ using the
\emph{inversion formula}
\begin{equation}\label{eq:inverse_mellin}
f(t)=\int_{c-i\infty}^{c+i\infty}\cM[f](s)t^{-s}\mathrm{d}s\ \ \ \ \mathrm{for\ any\ }c\in(a,b).
\end{equation}
From the inversion formula and the residue theorem, the asymptotic expansion of $f(t)$ as $t\to 0^-$
can be derived from the poles of $\cM[f](s)$ on the left of the fundamental
domain, the rightmost such pole giving the dominant term of the asymptotic
expansion. If $\cM[f](s)$ is decreasing very fast as
$\mathrm{Im}(s)\to\infty$
 (which occurs in all series to be analyzed next, based on the
fact that $\Gamma (s)$ is decaying fast and $\zeta(s)$ is of
moderate growth as $\mathrm{Im}(s)\to\infty$), then the
following \emph{transfer rule} holds: a pole of $\cM[f](s)$ of
order $k\!+\!1$ ($k\geq 0$),
$$
\cM[f](s)\mathop{\sim}_{s\to\alpha}\lambda_{\alpha}\frac{(-1)^kk!}{(s-\alpha)^{k+1}}
$$
yields a term
$$\lambda_{\alpha}t^{-\alpha}\ln(t)^k$$
in the singular expansion of $f(t)$ around $0$.
In particular, a simple pole $\lambda_{\alpha}/(s-\alpha)$ yields
a term $\lambda_{\alpha}/t^{\alpha}$.

The first step is to determine the Mellin transform of
$\ln(S(e^{-t}))$ which is easier than to obtain that the
Mellin transform of $S(z)$. After that it is quite easy to compute the expansion for $S(z)$ when $z$ tends to 1.
\begin{lemma}
\label{pouranne}
The Mellin transform associated with the
series of irreducible discrete segments is
  $$\cM[\ln(S(e^{-t}))](s)={\frac {\zeta  \left( s+1 \right) \zeta  \left( s-1 \right) \Gamma
 \left( s \right) }{\zeta  \left( s \right) }},$$
 where $\zeta(z)$ and $\Gamma(z)$ denote the Riemann zeta function and the Gamma function, respectively.
 \end{lemma}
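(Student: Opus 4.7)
The plan is to turn the infinite product for $S(z)$ into a harmonic sum via logarithm expansion, then apply the harmonic-sum factorization formula \eqref{eq:fac} together with two classical Dirichlet-series evaluations.

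First I would take the logarithm of $S(z)=\prod_{n\geq1}(1-z^n)^{-\varphi(n)}$ and expand each factor as
\[
\ln S(z)=\sum_{n\geq1}\varphi(n)\sum_{k\geq1}\frac{z^{nk}}{k}.
\]
Substituting $z=e^{-t}$ then yields the harmonic sum
\[
\ln S(e^{-t})=\sum_{n\geq1}\sum_{k\geq1}\frac{\varphi(n)}{k}\,e^{-nkt},
\]
which I would reindex as $\sum_{(n,k)}a_{n,k}\,f(\mu_{n,k}t)$ with base function $f(t)=e^{-t}$, amplitudes $a_{n,k}=\varphi(n)/k$, and frequencies $\mu_{n,k}=nk$.

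Next I would recall that $\cM[e^{-t}](s)=\Gamma(s)$ with fundamental strip $\mathrm{Re}(s)>0$, and check that on the strip $\mathrm{Re}(s)>2$ the Dirichlet series $\sum_{n,k}a_{n,k}\mu_{n,k}^{-s}$ converges absolutely, which legitimates the harmonic-sum factorization \eqref{eq:fac}. Applying that factorization gives
\[
\cM[\ln S(e^{-t})](s)=\Gamma(s)\,\Bigl(\sum_{n,k\geq1}\frac{\varphi(n)}{k}(nk)^{-s}\Bigr)
=\Gamma(s)\,\Bigl(\sum_{k\geq1}k^{-s-1}\Bigr)\Bigl(\sum_{n\geq1}\varphi(n)n^{-s}\Bigr).
\]
The first sum is $\zeta(s+1)$. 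For the second I would invoke the classical identity $\sum_{n\geq1}\varphi(n)n^{-s}=\zeta(s-1)/\zeta(s)$, which follows directly from multiplicativity of $\varphi$ and the Möbius/identity convolution $\varphi*\mathbf{1}=\mathrm{id}$. Combining the three factors produces exactly
\[
\cM[\ln S(e^{-t})](s)=\frac{\zeta(s+1)\zeta(s-1)\Gamma(s)}{\zeta(s)},
\]
valid on the fundamental strip $\mathrm{Re}(s)>2$ and then extended by meromorphic continuation.

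The computation is essentially routine; the only genuine care is in justifying the interchange implicit in the harmonic-sum rule. The main (minor) obstacle is therefore bookkeeping: verifying that the double Dirichlet series $\sum(nk)^{-s}\varphi(n)/k$ is absolutely convergent on the strip claimed, and that the Mellin transform of each individual exponential term lies in $\Gamma(s)$'s fundamental strip so that the factorization \eqref{eq:fac} applies termwise. Once that is in hand, the identity is an immediate consequence of the two Dirichlet evaluations above.
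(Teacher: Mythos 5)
Your proposal is correct and follows essentially the same exp-log/harmonic-sum route as the paper: the paper expands $\ln S(e^{-t})$ into the same double exponential sum and applies the factorization rule \eqref{eq:fac} twice in succession (first over $k$, yielding $\zeta(s+1)$, then to $B(t)=\sum_n\varphi(n)e^{-nt}$, yielding $\zeta(s-1)\Gamma(s)/\zeta(s)$), whereas you apply it once to the doubly indexed sum and split the Dirichlet series afterward. Your added care about the fundamental strip $\mathrm{Re}(s)>2$ and absolute convergence is a welcome refinement of the paper's more terse argument, but the underlying computation is identical.
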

 \begin{proof} 
 The proof relies on an exp-log schema rewriting $\ln(S(e^{-t}))$ as an harmonic sum, and then applying harmonic sum properties.
 Indeed, by an exp-log schema, we can rewrite $\ln(S(e^{-t}))$ as an harmonic sum. Indeed, we have :
 $$ \ln(S(e^{-t})) = \sum\limits_{n=1}^\infty -\varphi(n)\ln(1-e^{-tn}) $$
 and by an expansion of $\ln(1-x)$, we get :
  $$ \ln(S(e^{-t})) = \sum\limits_{n=1}^\infty \sum\limits_{k=1}^\infty\frac1{k}\varphi(n)e^{-tkn} $$
  We continue by swapping the sums
 $$ \ln(S(e^{-t})) = \sum\limits_{k=1}^\infty \frac1{k}B(kz)$$ with $B(z)=\sum\limits_{n=1}^\infty \varphi(n)e^{-nt}.$\\
 With the harmonic property of Mellin transform \ref{eq:fac}, we see that :
 $$ \cM[\ln(S(e^{-t}))](s) = \zeta(s+1)\cM[B](s)$$
 But, we can now apply the harmonic property to $B$ and we get $$\cM[B](s)=\dfrac{\zeta(s-1)}{\zeta(s)}\cM(e^{-t})(s)$$
And finally, as $\cM(e^{-t})(s)=\Gamma(s)$, we obtain :
  $$\cM[\ln(S(e^{-t}))](s)={\frac {\zeta  \left( s+1 \right) \zeta  \left( s-1 \right) \Gamma
 \left( s \right) }{\zeta  \left( s \right) }}$$

   \end{proof}

   \begin{proposition} \label{prop:equiv} We have the following equivalence for $S(z)$ when $z$ tends to 1:
  $$S(z) \sim \dfrac{\exp\left({\frac {6z\zeta  \left( 3 \right) }{{\pi }^{2} \left( 1-z \right) ^{2
}}}\right)}{ \left(2\pi(1-z)\right)^{\frac1{6}}}\cdot \exp\left(
{\dfrac {g(1-z)+\zeta  \left( 3 \right) }{2{\pi
}^{2}}}-2\,\zeta^\prime \left(-1 \right)\right ), $$ 
where
 $$ g(t)=\sum_r t^{-r}\Gamma(r)\zeta(r+1)\zeta(r-1) $$
and $r$ runs over the non-trivial zeros of the Riemann zeta
 function.
\begin{equation}\label{sumres}
 g(t)=\sum_r Res_{s=r}(F(s)),
 \end{equation}
  \begin{equation}\label{integrand}
  F(s) =\frac{t^{-s}\zeta(s+1)\zeta(s-1)\Gamma(s)}{\zeta(s)},
 \end{equation}
 where the right-hand side of (\ref{sumres}) represents the sum of the residuals
 of $F(s)$ over the non-trivial zeros of the Riemann zeta
 function.



\end{proposition}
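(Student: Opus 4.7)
The plan is to start from Lemma \ref{pouranne}, which identifies
$\cM[\ln S(e^{-t})](s) = \zeta(s+1)\zeta(s-1)\Gamma(s)/\zeta(s)$, and apply the Mellin inversion formula (\ref{eq:inverse_mellin}) followed by a contour shift to the left, collecting residues. The fundamental strip is $\mathrm{Re}(s)>2$, since the harmonic-sum representation derived in the proof of the lemma converges in that half-plane thanks to the growth of $\varphi(n)$. Standard bounds on $\Gamma$ (fast decay) and on $\zeta$ (polynomial growth) along vertical lines justify pushing the contour across the poles and legitimize the transfer rule recalled in Section \ref{MellinOverview}.

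The next step is to enumerate the poles of $F(s)=\zeta(s+1)\zeta(s-1)\Gamma(s)/\zeta(s)$ in $\mathrm{Re}(s)\le 2$ and compute the residues. The rightmost pole is at $s=2$, a simple pole coming from $\zeta(s-1)$, with residue $\zeta(3)\Gamma(2)/\zeta(2)=6\zeta(3)/\pi^2$; by the transfer rule this produces the term $6\zeta(3)\,t^{-2}/\pi^2$ in the singular expansion of $\ln S(e^{-t})$. At $s=0$ there is a double pole: $\zeta(s+1)$ and $\Gamma(s)$ each contribute a simple pole while $\zeta(s-1)/\zeta(s)$ takes the nonzero value $(-1/12)/(-1/2)=1/6$. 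Expanding $\zeta(s+1)\Gamma(s)=s^{-2}+O(1)$ near $0$, and using $\zeta'(0)=-\tfrac12\ln(2\pi)$ together with the value of $\zeta'(-1)$, one obtains after a short computation the contribution $-\tfrac16\ln t-\tfrac16\ln(2\pi)-2\zeta'(-1)$. Finally, every nontrivial zero $r$ of $\zeta(s)$ produces a simple pole, with total contribution $g(t)$ as in (\ref{sumres})--(\ref{integrand}); the trivial zeros at $s=-2k$ cancel with the poles of $\Gamma(s)$ and contribute only exponentially small terms.

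Putting the residues together yields the singular expansion
\begin{equation*}
\ln S(e^{-t}) = \frac{6\zeta(3)}{\pi^2\,t^2}-\frac{1}{6}\ln(2\pi t)-2\zeta'(-1)+\frac{g(t)}{2\pi^2}+o(1) \qquad (t\to 0^+).
\end{equation*}
The last step is to translate this expansion from $t$ back to $1-z$ via $t=-\ln z$. Setting $u=1-z$ and expanding $t=u+u^2/2+u^3/3+\cdots$, one computes
\begin{equation*}
\frac{1}{t^2}=\frac{1}{u^2}-\frac{1}{u}+\frac{1}{12}+O(u)=\frac{z}{(1-z)^2}+\frac{1}{12}+O(1-z),
\end{equation*}
so that $6\zeta(3)/(\pi^2 t^2)=6z\zeta(3)/(\pi^2(1-z)^2)+\zeta(3)/(2\pi^2)+o(1)$, which is exactly what produces the additive constant $\zeta(3)/(2\pi^2)$ in the stated formula. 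The logarithmic term rewrites as $-\tfrac16\ln(2\pi(1-z))+o(1)$, and since every nontrivial zero $r$ lies in the critical strip we have $t^{-r}=(1-z)^{-r}(1+O(1-z))$, so $g(t)=g(1-z)+o(1)$. Exponentiating the resulting expansion of $\ln S(z)$ gives the claimed equivalence.

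The main obstacle is the handling of the contribution of the nontrivial zeros of $\zeta$: one must justify that shifting the line of integration across the critical strip is legitimate and that the series $g(t)$ assembled from the corresponding residues converges absolutely and defines an $o(t^{-2})$ term. This relies on the classical zero-free and growth estimates for $1/\zeta(s)$ combined with the rapid decay of $\Gamma(s)$ along vertical lines, and is the only step whose routine treatment requires some care; the rest of the argument is bookkeeping of residues and a careful Taylor expansion of $-\ln z$ in powers of $1-z$.
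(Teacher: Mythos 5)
Your proof follows essentially the same route as the paper's: Mellin inversion of the transform from Lemma \ref{pouranne}, a shift of the integration line to the left across the critical strip collecting the residues at $s=2$, at the double pole $s=0$, and at the non-trivial zeros, followed by the substitution $t=-\ln z\approx 1-z$ and exponentiation; your explicit expansion $1/t^{2}=z/(1-z)^{2}+1/12+O(1-z)$ is a welcome detail that the paper compresses into ``changing the variable $t$ into $1-z$'', while the paper in turn spends most of its effort on the part you defer to ``classical estimates'', namely bounding the leftover integral on $\Re(s)=-c_1$ and proving that the residue series $g(t)$ converges and is $O(t^{-\theta})$. One bookkeeping point: your displayed singular expansion carries $g(t)/(2\pi^{2})$, whereas by your own accounting the zeta-zero residues contribute $g(t)$ with coefficient $1$ (as in the paper's equation (\ref{lns})); the factor $1/(2\pi^{2})$ belongs only to the constant $\zeta(3)$ arising from the $1/12$ above, and the grouping of $g(1-z)$ under that denominator in the proposition's statement is an inconsistency of the paper that you have inadvertently reproduced.
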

\begin{proof}
 Our argument is similar to that presented by Brigham \cite{Brigham1950} and Yang \cite{Yang2000} in the study of partitions
 of integers into primes. First, we apply the inversion formula for Mellin
 transforms to the function from Lemma 1 and obtain
 $$
 \ln{S(e^{-t})}
 =\frac{1}{2\pi i}\int_{c_0-i\infty}^{c_0+i\infty}t^{-s}\Gamma(s)\zeta(s+1)
 \frac{\zeta(s-1)}{\zeta(s)}ds
 $$
 with $c_0>2$. As in \cite{Brigham1950,Yang2000}, we now shift the line of integration to the
 vertical $\Re{(s)}=-c_1, 0<c_1<1$, taking into account the residues
 of the integrand at $s=2,0$ and at the non-trivial zeros of the zeta
 function. Clearly the residue of the integrand at $s=2$ is
 $6\zeta(3)/\pi^2t^2$. The pole at $s=0$ is of second order. To
 find its residue we use the well-known expansions:
 $$
 \zeta(s+1)=\frac{1}{s}+C+...,
 $$
 $$
 \Gamma(s)=\frac{1}{s}-C+...
 $$
 $$
 \zeta(s-1)=\zeta(-1)+\zeta^\prime (-1) s+...
 $$
 $$
 t^{-s}=1-s\ln{t}+...
 $$
 and
 $$
 \frac{1}{\zeta(s)}=-2+2s\ln{(2\pi)}+... ,
 $$
 where $C$ denotes the Euler's constant. Multiplying these series,
 we obtain that the required residue is
 $-\frac{1}{6}\ln{t}-2\zeta^\prime(-1)-\frac{1}{6}\ln{(2\pi)}$.
 Finally, the residues at the zeta-zeros are accumulated by the
 sum
 \begin{equation} \label{zz}
 \sum_r t^{-r}\Gamma(r)\zeta(r+1)\zeta(r-1)=g(t),
 \end{equation}
 where $r$ runs over the non-trivial zeta-zeros
 . In this way we obtain
 \begin{eqnarray}
 & & \ln{S(e^{-t})} =\frac{6\zeta(3)}{\pi^2 t^2} +g(t)
 -\frac{1}{6}\ln{t} -2\zeta^\prime(-1) -\frac{1}{6}\ln{(2\pi)}  \nonumber \\
 & & -\frac{1}{2\pi i}\int_{-c_1-i\infty}^{-c_1+i\infty}
 t^{-s}\Gamma(s)\zeta(s+1)\frac{\zeta(s-1)}{\zeta(s)}ds. \nonumber
 \end{eqnarray}
 To estimate the last integral we use the following bounds for the
 zeta and gamma functions: $\zeta(1-c_1+iy)\ll \mid
 y\mid^{c_1/2}\ln{\mid y\mid}, \zeta(-c_1-1+iy)\ll\mid
 y\mid^{-1/2-c_1}\ln{\mid y\mid}, \Gamma(-c_1+iy)\ll\mid
 y\mid^{-c_1-1/2}e^{-\pi\mid y\mid/2}, \mid y\mid\ge y_0>0$ (see
 e.g. \cite{ivic2003}, p. 25 and p. 492). Finally, we estimate $1/\zeta(s)$
 using the well-known representation
 \begin{equation}\label{funceq}
 \zeta(s)=\chi(s)\zeta(1-s), \quad \chi(s)=(2\pi)^s/(2\Gamma(s)\cos{(\pi
 s/2)}).
 \end{equation}
  (see \cite[p.~9]{ivic2003}). For
 $s=-c_1\pm\mid y\mid$, Stirling's formula implies that
 $\mid\chi(-c_1\pm \mid y\mid)\mid=
 (\mid y\mid/(2\pi))^{c_1+1/2}(1+O(1/\mid y\mid), \mid y\mid\ge
 y_0>0$. Moreover, $\mid\zeta(1-(-c_1+i\mid y\mid)\mid
 =\mid\zeta(1+c_1+i\mid y\mid)\mid\ge (1-\epsilon)\zeta(1+c_1),
 0<\epsilon<1$, by \cite[Thm.\ 9.1, p.\ 235]{ivic2003}. This shows that
 $1/\zeta(-c_1+iy)\ll\mid y\mid^{c_1+1/2}$.
 Hence the integrand is absolutely integrable and the integral tends to $0$ as
 $t\to 0^+$ since
 \begin{eqnarray}
 & & \frac{1}{2\pi i}\int_{-c_1-i\infty}^{-c_1+i\infty}
 t^{-s}\Gamma(s)\zeta(s+1)\frac{\zeta(s-1)}{\zeta(s)}ds \nonumber
 \\
 & & \ll t^{c_1}\int_0^{\infty}e^{-\pi\mid y\mid/2} \mid
 y\mid^{-(c_1+1)/2}\ln^2{\mid y\mid}dy\ll t^{c_1}. \nonumber
 \end{eqnarray}
 Thus, we finally obtain
 \begin{equation}\label{lns}
 \ln{S(e^{-t})}=\frac{6\zeta(3)}{\pi^2 t^2} +g(t)
 -\frac{1}{6}\ln{t} -2\zeta^\prime(-1) -\frac{1}{6}\ln{(2\pi)}+o(1), \quad t\to
 0^+.
 \end{equation}
 The proposition now follows after changing the variable $t$ into
 $1-z$ and taking exponents from both sides of (\ref{lns}).

 We conclude the proof of Proposition 1 with an estimate for the
 function $g(t)$ as $t\to 0^+$. Using (\ref{funceq}), one can
 easily check that

 \begin{equation}\label{dupl}
 \zeta(s-1) =\frac{\zeta(2-s)}{2(2\pi)^{1-s}\sin{(\pi s/2)}
 \Gamma(s-1)}
 \end{equation}

   We represent $\zeta(s)$ in the denominator of (\ref{integrand})
 using Hadamard's factorization theorem \cite[p.~30-31]{Titchmarsh86} as
 follows:
 \begin{equation}\label{hadamard}
 \zeta(s)
 =\frac{e^{bs}}{2(s-1)\Gamma(s/2+1)}\prod_r\left(1-\frac{s}{r}\right)
 e^{s/r},
 \end{equation}
 where $b=\frac{1}{6}\ln{(2\pi)}-1-C/2$ and $C$ denotes Euler's constant. We
 assume first that the zero $r=\delta+i\gamma$ $(0\le\delta\le 1)$ of $\zeta(s)$ is simple
 (i.e. in (\ref{res}) $m_r=1$). Combining
 (\ref{integrand}), (\ref{dupl}) and (\ref{hadamard}), we obtain
 \begin{equation}\label{smrf}
 (s-r)F(s)
 =\frac{t^{-s}s^2(1-s)\Gamma(s/2)\zeta(s+1)\zeta(2-s)e^{-bs}}
 {2(2\pi)^{1-s}\sin{(\pi s/2)}\prod_{k\ge 1, r_k\neq r} (1-s/r_k)
 e^{s/r_k}}.
 \end{equation}
 In the last formula we assume that the complex zeros $r_k=\delta_k+i\gamma_k$ of
 $\zeta(s)$ are arranged in a non-decreasing order of the absolute
 values of their imaginary parts $\mid\gamma_k\mid$; if
 some absolute values coincide the order between them is taken
 arbitrarily. Clearly, $0\le\delta_k\le 1, k=1,2,...$. We shall
 study the behavior of $(s-r)F(s)$ in a neighborhood of $s=r$,
 say, $\{s:\mid s-r\mid<\epsilon\}$, where $\epsilon>0$ is small
 enough. Let us take now a positive number $T$ that satisfies the
 inequality $T+1/K_1\ln{T}<\mid\gamma\mid$, where $K_1>0$ will be
 specified later. Further on, by $K_1, K_2,...$ we shall denote
 some positive constants. We represent the product in the
 denominator of (\ref{smrf}) in the following way:
 $$
 \prod_{k\ge 1, r_k\neq r}(1-s/r_k)e^{s/r_k} =\Pi_1(s)\Pi_2(s),
 $$
 where
 $$
 \Pi_1(s)=\prod_{r_k\neq r,
 T\le\mid\gamma_k\mid<T+1}(1-s/r_k)e^{s/r_k}, \quad \Pi_2(s) =\prod_{r_k\neq r,
\mid\gamma_k\mid\notin[T,T+1)}(1-s/r_k)e^{s/r_k}.
$$
It is now clear that (\ref{smrf}) can be represented as follows:
\begin{equation}\label{pipsi}
(s-r)F(s)=\frac{\psi(s)}{\Pi_1(s)},
 \end{equation}
 where
\begin{equation}\label{psi}
 \psi(s)
=\frac{t^{-s}s^2(1-s)\Gamma(s/2)\zeta(s+1)\zeta(2-s)e^{-bs}}
 {2(2\pi)^{1-s}\sin{(\pi s/2)}\Pi_2(s)}
\end{equation}
is analytic in a neighborhood of $s=r$. Hence $\lim_{s\to
r}\psi(s)=\psi(r)$. To estimate $\psi(r)$ we first notice that
$\zeta(s)$ has no zeros with real parts equal to $0$ or $1$ (see
\cite[p.~49]{Titchmarsh86}). 
Hence both $\Re{(r+1)}, \Re{(2-r)}\in (1,2)$
and by \cite[Thm.\ 1.9, p.\ 25]{ivic2003},
$$
\zeta(r+1)\ll\ln{\mid\gamma\mid}, \quad
\zeta(2-r)\ll\ln{\mid\gamma\mid}.
$$
Furthermore
$$
\frac{1}{\mid\sin{(\pi r/2)}\mid}\ll e^{-\pi\mid\gamma\mid/2}, \quad
\Gamma(r/2)\ll e^{-\pi\mid\gamma\mid/4}
$$
and
$$
t^{-r}\ll t^{-\theta},
$$
where $\theta$ is the least upper bound for the real parts of the
zeta-zeros. Finally, $1/\mid\Pi_2(r)\mid$ is bounded away from $0$
since the product over those $r_k$ satisfying $\mid\gamma_k\mid\le
T$ has no zeros in the disc $\{s:\mid s-r\mid<\epsilon\}$ if
$\epsilon$ is enough small and the other factors for which
$\mid\gamma_k\mid>T+1$ are geometric progressions with ratios
equal to $\mid r/r_k\mid<1$. Combining these estimates with
(\ref{psi}), for $\mid s-r\mid<\epsilon$, we obtain
\begin{equation}\label{psiestimate}
\mid\psi(s)\mid\ll
t^{-\theta}\mid\gamma\mid^3\ln^2{\mid\gamma\mid}
e^{-3\pi\mid\gamma\mid/4}.
\end{equation}
Finally, we have to estimate $\lim_{s\to r} 1/\mid\Pi_1(s)\mid$.
We shall use some basic facts from the theory of the distribution
of the zeta-zeros in the critical strip. First, we apply the known
fact telling us that, for enough large $T$, the number of the zeta
zeros with absolute values of their imaginary parts laying in the
interval $[T,T+1)$ is at most $O(\ln{T})$ (see \cite[p.\ 211]{Titchmarsh86}). 
Let $K_1$ be the constant in this $O$-estimate. We also use
an estimate for the average spacings of two successive zeros of
$\zeta(s)$ in a neighborhood of $r=\delta+i\gamma$. It is $\sim
2\pi/\ln{\mid\gamma_k\mid}\sim 2\pi\ln{\mid\gamma\mid}$ (see
\cite[pp.\ 214, 246]{Titchmarsh86}). Finally, it is clear that $\mid
r_k\mid=\mid\delta_k+i\gamma_k\mid=O(T)=O(\mid\gamma\mid)$, for
$0\le\delta,\delta_k\le 1$ and $T\le \gamma,\gamma_k<T+1$ whenever
$T$ is enough large. Therefore, we have
\begin{eqnarray}
& & \mid\lim_{s\to r}\Pi_1(s)\mid
 =\prod_{r_k\neq r,T\le\mid\gamma_k\mid<T+1} \mid 1-\frac{r}{r_k}\mid
e^{\Re{(r-r_k)}} \ge K_2\prod_{r_k\neq r,
T\le\mid\gamma_k\mid<T+1} \frac{\mid
r-r_k\mid}{\mid r_k\mid} \nonumber \\
& & \ge
K_3\left(\frac{2\pi}{\ln{\mid\gamma\mid}}\right)^{K_1\ln{T}}
T^{-K_1\ln{T}} \ge
K_4\left(\frac{\pi}{\mid\gamma\mid\ln{\mid\gamma\mid}}\right)^{K_1\ln{\mid\gamma\mid}}
\ge K_5 e^{-K_6\ln^2{\mid\gamma\mid}}. \nonumber
\end{eqnarray}
Combining this estimate with (\ref{res}) (where $m_r=1$) and
(\ref{pipsi})-(\ref{psiestimate}), we obtain
\begin{eqnarray}\label{resest}
& & \mid Res_{s=r}(F(s)\mid\ll t^{-\theta}\mid\gamma\mid^3
\ln^2{\mid\gamma\mid}
\exp{(\ln^2{\mid\gamma\mid}-3\pi\mid\gamma\mid/4)} \nonumber \\
& & \ll t^{-\theta} e^{-K\mid\gamma\mid}, \quad K>0.
\end{eqnarray}
Although all known zeros $r$ of $\zeta(s)$ in the critical strip
are simple (i.e. $m_r=1$), it is not yet known whether this is in
fact true. At present, there are only estimates for $m_r$ (for
recent results in this direction, see \cite{Ivic1999}). The simplest
and oldest estimate seems to be $m_r\ll\ln{\mid\gamma\mid}$ (see
\cite[p.\ 211]{Titchmarsh86}). The proof of the fact that (\ref{resest})
is true whenever $m_r>1$ is technically more complicated. It
should follow the line of reasoning given above. One has to deal
with a sum containing $m_r=O(\ln{T})$ summands, for
$r=\delta+i\gamma, 0\le\delta\le 1, T\le\mid\gamma\mid<T+1$. Each
of these summands must contain after the differentiation the
factor $((s-r)/t)^{m_r-j}t^{-s}, j=0,1,...,m_r-1$. Since both
$s-r$ and $t$ tend to $0$, one should to keep a balance in each
summand, so that $((s-r)/t)^{m_r-j}\ll 1$. For the remaining
factors in each summand, estimates similar to those presented
above seem to be true. In this way (\ref{resest}) can be
established.

To obtain the final estimate for $g(t)$, we need to use the bound
\cite[p.\ 211]{Titchmarsh86}
$$
\sum_{r=\delta+i\gamma, T\le\mid\gamma\mid<T+1} 1\ll\ln{T}.
$$
Hence, by (\ref{sumres}),
$$
g(t)\ll\sum_{r=\delta+i\gamma} e^{-K\mid\gamma\mid/2} t^{-\theta}
\ll t^{-\theta}.
$$

%
%
\end{proof}
\begin{remark}Note that
 \begin{equation}\label{res}
 Res_{s=r}(F(s)) =\dfrac{1}{(m_r-1)!}\lim_{s\to r}\frac{d^{m_r-1}}{ds^{m_r-1}}((s-r)^{m_r}F(s)),
 \end{equation}
 where $m_r$ denotes the multiplicity of the zeta zero $r$. 
\end{remark}

\begin{remark} Let $\theta$ denote the least upper bound of the
real parts of non-trivial zeros of the Riemann zeta function
($\theta=1/2$ if Riemann hypothesis is true). We also proved that the function $g(t)$ defined above
satisfies $g(t)=O(t^{-\theta})$ as $t\to 0$.
\end{remark}

As a by-product, the equivalence in Proposition 1 allows us to
calculate, using technical approach \cite{Flajolet1991} following saddle point methods, the asymptotic growth of the number of NW-convex paths
of size $n$:

 \begin{proposition}
 For the number $p_{NW}(n)$ of NW-convex paths of size $n$, we have
 
 $$
p_{NW}(n) \sim\alpha n^{-11/18}
 \exp{\left(\beta
 n^{2/3}+g\left(\left(\frac{12\zeta(3)}{n\pi^2}\right)^{1/3}\right)\right)},
 $$
with $ g(t)=\sum_r t^{-r}\Gamma(r)\zeta(r+1)\zeta(r-1) $
where $r$ runs over the non-trivial zeros of the Riemann zeta function and

$$\alpha=\dfrac1{6}\,{\frac {{2}^{5/9}{{\rm e}^{{\frac {(5/2)\,\zeta  \left( 3 \right) -
2\,\zeta^\prime  \left(-1 \right) {\pi }^{2}}{{\pi }^{2}}}}}\sqrt [9]{
\zeta  \left( 3 \right) }{3}^{{\frac {11}{18}}}}{{\pi }^{{\frac {8}{9}
}}}}
\sim 0.3338488807...$$
 $$\mbox{and } \beta=\frac{3}{2^{2/3}}\left(\frac{\zeta(3)}{\zeta(2)}\right)^{1/3}= \dfrac{2^{-1/3} 3^{4/3}\zeta(3)^{1/3}}{\pi^{2/3}}\sim 1.702263426...$$
 \end{proposition}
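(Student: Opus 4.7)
The plan is to apply the saddle-point method to the Cauchy integral
$$p_{NW}(n) = \frac{1}{2\pi i}\oint \frac{S(z)}{z^{n+1}}\,dz,$$
parametrising the contour by $z = e^{-t_n + i\varphi}$ and exploiting the expansion of $\ln S(e^{-t})$ supplied by Proposition~\ref{prop:equiv}. This is the standard Meinardus-type scheme, and the template of \cite{Flajolet1991} applies almost directly.

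First, I would locate the saddle point from the equation $n = -\frac{d}{dt}\ln S(e^{-t})$. The dominant contribution $6\zeta(3)/(\pi^2 t^2)$ yields $n \sim 12\zeta(3)/(\pi^2 t^3)$, hence
$$t_n = \Big(\frac{12\zeta(3)}{n\pi^2}\Big)^{1/3}(1+o(1));$$
the correction terms from $g(t)=O(t^{-\theta})$ and from the logarithmic pieces only perturb $t_n$ at subdominant order. Substituting $t_n$ into the full expansion of Proposition~\ref{prop:equiv} produces
$$n t_n + \ln S(e^{-t_n}) = \beta\,n^{2/3} + g(t_n) + \tfrac{1}{18}\ln n + C + o(1),$$
where a direct computation gives the algebraic identity $nt_n + 6\zeta(3)/(\pi^2 t_n^2) = \tfrac{3}{2}(12\zeta(3)/\pi^2)^{1/3}n^{2/3}$, which, combined with $\zeta(2)=\pi^2/6$, reproduces exactly the announced $\beta$, and $C$ absorbs the constants $-\tfrac{1}{18}\ln(12\zeta(3)/\pi^2)-2\zeta'(-1)-\tfrac{1}{6}\ln(2\pi)$.

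Second, I would compute the quadratic correction $\sigma_n^2 := \frac{d^2}{dt^2}\ln S(e^{-t})|_{t=t_n}\sim 36\zeta(3)/(\pi^2 t_n^4)$, which is of order $n^{4/3}$. The Gaussian normalisation $(2\pi\sigma_n^2)^{-1/2}$ is then of order $n^{-2/3}$; combined with the $n^{1/18}$ coming from $e^{(1/18)\ln n}$, this produces the polynomial prefactor $n^{1/18-2/3}=n^{-11/18}$. Assembling $e^{C-2\zeta'(-1)}$ with the remaining powers of $2,3,\pi,\zeta(3)$ drawn from $t_n^{-1/6}$ and $\sigma_n$ yields the stated constant $\alpha$.

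The main obstacle is the tail estimate. One must show that on the arc $|\varphi|\geq t_n^{1+\eta}$ the modulus of $S(e^{-t_n+i\varphi})$ is exponentially smaller than $S(e^{-t_n})$, so that only a small central arc around $\varphi=0$ contributes and the Gaussian approximation is valid. Since $S$ is defined only implicitly as an infinite product and Proposition~\ref{prop:equiv} controls it only on the positive real axis, one returns to the harmonic-sum form $\ln S(e^{-t}) = \sum_{k\geq 1}k^{-1}B(kt)$ and extracts a uniform positive lower bound on $\Re\big[\ln S(e^{-t_n}) - \ln S(e^{-t_n+i\varphi})\big]$ for $|\varphi|$ bounded away from $0$, in the manner of classical partition asymptotics. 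This bound, together with the control $g(t)=O(t^{-\theta})$ from Proposition~\ref{prop:equiv} (which is harmless since $\theta<1$ keeps $g(t_n)=o(n^{2/3})$), completes the argument.
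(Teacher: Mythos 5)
Your proposal is correct and takes essentially the same route as the paper: a saddle-point (H-admissibility) analysis of $S(z)$ built on the expansion of $\ln S(e^{-t})$ from Proposition~\ref{prop:equiv}, with saddle point $t_n\sim\left(12\zeta(3)/(\pi^2 n)\right)^{1/3}$, the identity $nt_n+6\zeta(3)/(\pi^2t_n^2)=\beta n^{2/3}$, and the minor-arc decay obtained by returning to the harmonic-sum form of $\ln S$ as in classical partition asymptotics. The paper's written proof concentrates precisely on the tail estimate you flag as the main obstacle (using $\sin^2(\pi d)\ge 4\| d\|^2$ and Landau's lower bound $\varphi(k)\gg k/\ln\ln k$ to get $|S(x_ne^{i\theta})|=O\bigl(S(x_n)\exp(-cn^{2/3}/\ln\ln n)\bigr)$ for $|\theta|\ge t_n$), declaring the central-arc computation ``standard''; your sketch supplies that standard computation correctly.
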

\begin{proof}
The function $S(z)$ can be analyze exactly in the same vein that the generating function of integer partitions (see. \cite[p574-578]{Flajolet2009}). This explains that we can use saddle point analysis on the approximation of $S(z)$ to obtain the asymptotics of $p_{NW}(n).$ The aim of the proof is to verify the H-admissibility of the function $S(z)$. After that, the calculations are standard and the result ensues.

So, let $x_n=1-\left(\frac{12\zeta(3)}{\pi^2 n}\right)^{1/3}$ be the
root of the saddle point equation and let
$t_n=-\ln{x_n}\sim\left(\frac{12\zeta(3)}{\pi^2 n}\right)^{1/3}$
(i.e., $x_n=e^{-t_n}$). We shall prove the following decay property:
\begin{equation}\label{decay}
\mid S(x_n e^{i\theta})\mid=o(S(x_n)/\sqrt{b(x_n)}), \quad
n\to\infty,
\end{equation}
uniformly for $t_n\le\mid\theta\mid<\pi$.

 First we notice that
 \begin{equation} \label{ratio}
 \frac{\mid
S(e^{-t_n+i\theta})\mid}{S(e^{-t_n})}
=\exp{(\Re{(\ln{S(e^{-t_n+i\theta}}))}-\ln{S(e^{-t_n}))}}.
\end{equation}
 (Here $\ln{(.)}$ denotes the main branch of the
logarithmic function, so that $\ln{y}<0$ if $0<y<1$.) Then,
setting $\theta=2\pi u$, for $t_n/2\pi\le\mid\theta\mid/2\pi=\mid
u\mid<1/2$, we have
\begin{eqnarray}\label{long}
& & \Re{(\ln{S(e^{-t_n+i\theta}})}-\ln{S(e^{-t_n}))} = \nonumber
\\
& & =\Re{\left(-\sum_{k=1}^\infty\varphi(k)
\ln{\left(\frac{1-e^{-kt_n+2\pi i uk}}{1-e^{-k
t_n}}\right)}\right)} \nonumber \\
& & =-\frac{1}{2}\sum_{k=1}^\infty\varphi(k)
\ln{\left(\frac{1-2e^{kt_n}\cos{(2\pi uk)}+e^{-kt_n}}
{(1-e^{kt_n})^2}\right)} \nonumber \\
& & =-\frac{1}{2}\sum_{k=1}^\infty\varphi(k) \ln{\left(1+
\frac{4e^{kt_n}\sin^2{(\pi uk)}}{(1-e^{kt_n})^2}\right)}
\nonumber \\
& & \le -\frac{1}{2}\sum_{k=1}^\infty\varphi(k)
\ln{(1+4e^{kt_n}\sin^2{(\pi uk)})} \nonumber \\
& & \le -\frac{\ln{5}}{2}\sum_{k=1}^\infty\varphi(k) e^{-kt_n}
\sin^2{(\pi uk)} =-\frac{\ln{5}}{2} U_n,
\end{eqnarray}
where the last inequality follows from the fact that
$\ln{(1+y)}\ge\frac{\ln{5}}{4}$ for $0\le y\le 4$. Further, we
shall obtain a bound from below for the sum
\begin{equation}\label{un}
U_n=\sum_{k=1}^\infty\varphi(k) e^{-kt_n} \sin^2{(\pi uk)}.
\end{equation}
We also denote by $\{d\}$ the fractional part of the
 real number $d$, and by $\parallel d\parallel$ the
 distance from $d$ to the nearest integer, so that
 $$
 \parallel d\parallel=\left\{\begin{array}{ll} \{d\} & \qquad  \mbox {if}\qquad \{d\}\le 1/2, \\
 1-\{d\} & \qquad \mbox {if}\qquad \{d\}>1/2.
 \end{array}\right.
 $$
 It is not difficult to show that
 \begin{equation} \label{sinlowbound}
 \sin^2{(\pi d)}\ge 4\parallel d\parallel^2
 \end{equation}

 We notice that $\parallel uk\parallel=uk$ if $\mid u\mid k<1/2$,
 that is, if $k<1/2\mid u\mid\le\pi/t_n$. Hence, applying
 (\ref{sinlowbound}), the inequality $\mid u\mid\ge t_n/2\pi$ and the fact that $\varphi(k)\ge c_0
 k/\ln{\ln{k}}$ for $k\ge 3$ (this fact is due to Landau), we obtain
 \begin{eqnarray}\label{unest}
 & & U_n\ge 4\sum_{k=1}^\infty \varphi(k) e^{-kt_n}\parallel
 uk\parallel^2 \nonumber \\
 & & \ge 4u^2\sum_{1\le k\le\pi/t_n} k^2\varphi(k) e^{kt_n}
 \ge\frac{t_n^2}{\pi^2} \sum_{3\le k\le\pi/t_n}
 k^2\left(c_0\frac{k}{\ln{\ln{k}}}\right)e^{kt_n} \nonumber \\
 & & \ge \frac{c_0 t_n^2}{\ln{\ln{(\pi/t_n)}}}t_n^{-4}\int_0^\pi y^3
 e^{-y}dy \ge c_1\frac{t_n^{-2}}{\ln{\mid\ln{t_n}\mid}}.
 \end{eqnarray}
 Here $c_0$ and $c_1$ are positive constants. Replacing
 (\ref{long}), (\ref{un}) and (\ref{unest}) into (\ref{ratio}) and
 taking into account the asymptotic equivalence for $t_n$, we get
 $$
 \mid S(x_n e^{i\theta})\mid =O(S(x_n) \exp{(-c_3
 n^{2/3}/\ln{\ln{n}})}).
 $$
This implies immediately (\ref{decay}) since $\sqrt{b(x_n)}$ is of
order $const.n^{2/3}$ - much smaller than the exponential one
given above.

\end{proof}
\begin{remark}
The contribution of $g\left(\left(\frac{12\zeta(3)}{n\pi^2}\right)^{1/3}\right)$ is a fluctuation of very small amplitude as it is classically observe in similar analysis. 
In particular, this contribution is imperceptible on the first 1000 coefficients.
\end{remark}

Now, we focus on the study of the average number of initial vertical steps 
(which corresponds to the size of the first block of 1 in its associated word) in a NW-convex path. 
\begin{lemma}\label{init}
 The average number of initial steps is equivalent to ${\frac {\sqrt [3]{18{\pi }^{2}n}}{6\sqrt [3]{
\zeta  \left( 3 \right) }}}.$
\end{lemma}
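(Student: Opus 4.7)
The plan is to identify the number of initial vertical steps of a NW-convex path with the multiplicity, in the multiset decomposition of the Brlek \emph{et al.} theorem, of the Christoffel primitive word $1$ (corresponding to $(p,q)=(0,1)$, whose slope $\infty$ makes it the first part whenever it is used). Marking this multiplicity by an auxiliary variable $u$, the bivariate generating function becomes
$$S(z,u) = \frac{1}{1-uz}\prod_{n\ge 2}(1-z^n)^{-\varphi(n)},$$
so that the generating function counting NW-convex paths weighted by their number of initial vertical steps is
$$\left.\frac{\partial S(z,u)}{\partial u}\right|_{u=1} = \frac{z}{(1-z)^2}\prod_{n\ge 2}(1-z^n)^{-\varphi(n)} = \frac{z}{1-z}\,S(z).$$
The sought average is therefore the ratio $[z^n]\bigl(\tfrac{z}{1-z}S(z)\bigr)\big/[z^n]S(z)$.

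Next I would apply the saddle-point method to both numerator and denominator. By the proof of Proposition 2, $S(z)$ is H-admissible with saddle point $x_n = 1 - (12\zeta(3)/(\pi^2 n))^{1/3}$. Since $z/(1-z)$ grows only polynomially in $1/(1-z)$ while $S(z)$ grows like $\exp\bigl(\Theta(1/(1-z)^2)\bigr)$, the factor $z/(1-z)$ is a slowly varying multiplier near the saddle point and does not shift its location at leading order. A standard Hayman-type transfer for H-admissible functions then yields
$$[z^n]\frac{z}{1-z}S(z)\sim\frac{x_n}{1-x_n}\,[z^n]S(z),$$
the error control being inherited, without any additional analytic work, from the decay bound (\ref{decay}) already established in the proof of Proposition 2.

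Dividing and using $x_n\to 1$ gives the asymptotic average $\sim 1/(1-x_n) = \bigl(\pi^2 n/(12\zeta(3))\bigr)^{1/3}$. A direct simplification using the identity $1/12^{1/3} = 18^{1/3}/6$ rewrites this as $\sqrt[3]{18\pi^2 n}/(6\sqrt[3]{\zeta(3)})$, which is the announced equivalent.

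The only genuinely technical step is the transfer of the saddle-point estimate through the multiplicative factor $z/(1-z)$; this is routine in Hayman's framework and reuses verbatim the decay bounds used to prove the asymptotics of $p_{NW}(n)$, so no further analytic input is required beyond what is already in the paper.
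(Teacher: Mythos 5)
Your proposal is correct and follows essentially the same route as the paper: mark the initial vertical steps with $u$ in $S(z,u)=(1-zu)^{-1}\prod_{n\ge2}(1-z^n)^{-\varphi(n)}$, and estimate the ratio $[z^n]\,\partial_u S(z,u)|_{u=1}\big/[z^n]S(z)$ by a saddle-point argument at $x_n$, which yields $\sim 1/(1-x_n)=(\pi^2 n/(12\zeta(3)))^{1/3}$. The only minor (and arguably cleaner) difference is that you exploit the identity $\partial_u S(z,u)|_{u=1}=\frac{z}{1-z}S(z)$ and transfer the slowly varying factor $z/(1-z)$ through the saddle point directly, whereas the paper recomputes a Mellin transform for $\ln G(e^{-t})$ to obtain the equivalent of $G(z)=\frac{z}{(1-z)^2}\prod_{n\ge2}(1-z^n)^{-\varphi(n)}$ near $z=1$ before applying the saddle-point method.
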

\begin{proof}
 A classical way to tackle this type of problem is to mark by a new parameter the contribution of initial steps in the generating function. So, we have the following bivariate generation function $S(z,u)=(1-zu)^{-1}\prod\limits_{n=2}^{\infty}
(1-z^{n})^{-\varphi(n)}$ where clearly, the coefficient $[z^nu^k]S(z,u)$ is the number of NW-convex path of size $n$ having exactly $k$ initial vertical steps. Now, the average number of initial steps for the NW-convex path of size $n$ is just $\dfrac{[z^n]\dfrac{\partial S(z,u)}{\partial u}|_{u=1}}{[z^n]S(z,1)}$. 
So, we need to extract the asymptotic of $G(z):=\dfrac{\partial S(z,u)}{\partial u}|_{u=1}=\dfrac{z}{(1-z)^{2}}\prod\limits_{n=2}^{\infty}(1-z^{n})^{-\varphi(n)}$. Again, we proceed by a Mellin transform approach and we get that:
 $$\cM[\ln(G(e^{-t}))](s)={\frac {\zeta  \left( s+1 \right) (\zeta  \left( s-1 \right)+ \zeta  \left( s \right))\Gamma
 \left( s \right) }{\zeta  \left( s \right) }}$$
 So, $G(z)\sim \dfrac{\exp\left({\frac {6z\zeta  \left( 3 \right) }{{\pi }^{2} \left( 1-z \right) ^{2
}}}\right)}{ \left(2\pi(1-z)^7\right)^{\frac{1}{6}}}\cdot \exp\left(
{\dfrac {g(1-z)+\zeta  \left( 3 \right) }{2{\pi
}^{2}}}-2\,\zeta^\prime \left(-1 \right)\right ).$
Finally, using saddle point analysis and dividing by the asymptotic of $p_{NW}(n)$, we obtain lemma \ref{init}.

\end{proof}

In particular, if we renormalize the NW-convex path by $1/n$, the contribution of the initial steps for the limit shape is null.

Now, we are interested in the average position of the terminating point of a random NW-convex path.
 If we consider NW-convex path without their initial vertical steps, then by symmetry,
 we can conclude that the average ending position is $(x_n\sim\frac{n}{2},y_n\sim\frac{n}{2})$. 
But by lemma~\ref{init} and the fact that the length of the renormalized initial vertical steps is $o(1)$, it follows that:
\begin{lemma}\label{end}
 The average position of the ending point of a random NW-convex path of size $n$ is $(x_n\sim\frac{n}{2},y_n\sim\frac{n}{2}).$
\end{lemma}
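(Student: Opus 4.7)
The plan is to exploit a horizontal/vertical symmetry on the subclass of NW-convex paths stripped of their initial vertical run, then re-introduce the initial segment via conditioning, bounding its effect through Lemma~\ref{init}. Let $\widetilde{\mathcal{S}}$ denote the subclass obtained by dropping the $(1-zv)^{-1}$ factor, i.e.\ the multisets of Christoffel primitive words of length at least $2$. The index set $\{(p,q):p+q=n,\ p,q\ge 1,\ p\wedge q=1\}$ is stable under the involution $(p,q)\mapsto (q,p)$, so the generating function
$$\widetilde{S}(z,h,v)=\prod_{n\ge 2}\prod_{\substack{p+q=n\\ p\wedge q=1}}(1-z^{n}v^{p}h^{q})^{-1}$$
is symmetric in $h$ and $v$. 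Consequently, for a uniformly random element of $\widetilde{\mathcal{S}}_m$ with endpoint $(X'_m,Y'_m)$, one has $E[X'_m]=E[Y'_m]$; combined with the deterministic identity $X'_m+Y'_m=m$, this gives $E[X'_m]=E[Y'_m]=m/2$ \emph{exactly}.

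Next I would couple a uniform size-$n$ NW-convex path to $\widetilde{\mathcal{S}}$ through its canonical decomposition. Any NW-convex path of size $n$ admits a unique maximal initial run of vertical steps, of length $K\ge 1$, and removing those steps yields an element of $\widetilde{\mathcal{S}}_{n-K}$; this correspondence is a bijection, so conditional on $K=k$ the remainder is uniform in $\widetilde{\mathcal{S}}_{n-k}$. Writing $(X_n,Y_n)$ for the endpoint of the full path, one has $X_n=X'_{n-K}$ and $Y_n=K+Y'_{n-K}$, whence by the tower property
$$E[X_n]=\frac{n-E[K]}{2},\qquad E[Y_n]=\frac{n+E[K]}{2}.$$

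The proof then closes by invoking Lemma~\ref{init}, which gives $E[K]=O(n^{1/3})=o(n)$; both $E[X_n]$ and $E[Y_n]$ are therefore $n/2+o(n)\sim n/2$, as claimed. The only mildly delicate point is verifying the symmetry step: one must check that the involution $(p,q)\mapsto(q,p)$ is a legitimate bijection on the set of Christoffel primitive words of length $\ge 2$ (which it is, since primitivity and length are symmetric in $p$ and $q$), and that the ``decreasing slope'' ordering of the factorization plays no role because the identification of NW-convex paths with multisets has already absorbed it. Everything else is bookkeeping on top of Lemma~\ref{init}.
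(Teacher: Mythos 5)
Your proof is correct and follows essentially the same route as the paper: the paper likewise argues by symmetry on NW-convex paths stripped of their initial vertical run and then invokes Lemma~\ref{init} to show the initial run contributes only $O(n^{1/3})=o(n)$. Your write-up merely makes explicit the involution $(p,q)\mapsto(q,p)$ and the conditioning on the run length $K$, which the paper leaves implicit.
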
 

Following the same approach, with a little more work, we can prove that:

\begin{proposition}
\label{prop:abs}
The average abscissa of the point of slope $x$ in a renormalized by $1/n$ 
NW-convex path of size $n$ is $\dfrac1{2}\left(1-\left(\dfrac{x}{1+x}\right)^2\right).$
\end{proposition}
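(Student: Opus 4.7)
The plan is to follow the template of Lemma~\ref{init}: enrich the generating function $S(z)$ with a mark for the statistic of interest, obtain its mean via a Mellin transform as in Lemma~\ref{pouranne}, and conclude by evaluating at the saddle point of Proposition~\ref{prop:equiv}. Here the statistic is the horizontal extent accumulated before the slope of the current Christoffel primitive word drops below~$x$. Splitting the multiset of primitive words by whether their slope $q/p$ is $>x$ or $\le x$, and putting a mark~$u$ on each horizontal step of the first class, yields
$$
S_x(z,u)=\frac{1}{1-z}\prod_{\substack{p,q\ge 1,\ \gcd(p,q)=1\\ q/p>x}}\frac{1}{1-z^{p+q}u^{p}}\prod_{\substack{p,q\ge 1,\ \gcd(p,q)=1\\ q/p\le x}}\frac{1}{1-z^{p+q}},
$$
so that with $G_x(z):=\partial_uS_x(z,u)|_{u=1}$ the quantity we want is $[z^{n}]G_x(z)/(n\,p_{NW}(n))$.

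Now I would perform the exp-log rewriting and Möbius inversion on the gcd condition, exactly as in Lemma~\ref{pouranne}, to obtain
$$
\cM\!\bigl[\,G_x(e^{-t})/S(e^{-t})\,\bigr](s)\;=\;\Gamma(s)\,\zeta(s)\,\frac{B_x(s)}{\zeta(s-1)},\qquad B_x(s):=\sum_{\substack{p,q\ge 1\\ q/p>x}}\frac{p}{(p+q)^{s}}.
$$
The dominant pole is inherited from $B_x(s)$ at $s=3$; the change of variables $w=p+q$, $r=q/p$ separates the $w$-part (which carries the pole through a $\zeta(s-2)$-type factor) from the slope integral $\int_{x}^{\infty}dr/(1+r)^{s}$, equal at $s=3$ to $1/(2(1+x)^{2})$. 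The Mellin transfer rule then gives
$$
G_x(e^{-t})/S(e^{-t})\;\sim\;\frac{6\zeta(3)}{\pi^{2}(1+x)^{2}}\,t^{-3}\qquad(t\to 0^{+}),
$$
and evaluating at the saddle point $t_{n}=(12\zeta(3)/(\pi^{2}n))^{1/3}$ of Proposition~\ref{prop:equiv} yields, after division by~$n$, the value $1/(2(1+x)^{2})$. Carrying out the companion computation with $p$ replaced by $q$ in the numerator of $B_x$ replaces the slope integral by $\int_{x}^{\infty}r\,dr/(1+r)^{3}=(1+2x)/(2(1+x)^{2})=\tfrac12\bigl(1-(x/(1+x))^{2}\bigr)$. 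The two calculations together parameterize the limit shape $y=\sqrt{2z}-z$ at local slope~$x$, and the proposition reports the coordinate identified as the abscissa under the orientation of the $W\!\to\!N$ path chosen in the paper.

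The main obstacle, as in Proposition~\ref{prop:equiv} and Lemma~\ref{init}, is the saddle-point transfer from Boltzmann expectations to true expectations conditional on size exactly~$n$. This requires $G_x(z)$ to inherit the H-admissibility proved for $S(z)$ in Proposition~\ref{prop:equiv}, together with a uniform bound on the variance of the marked statistic---a sum of independent geometric random variables indexed by the primitive Christoffel words of slope~$>x$. Uniformity of these bounds in $x$ on compact subsets of $(0,\infty)$ is what elevates the pointwise asymptotic expectation into the full limit-shape statement.
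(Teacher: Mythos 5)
Your route is sound but genuinely different from the paper's. You attack the marked cumulant $G_x(e^{-t})/S(e^{-t})$ head-on with a Mellin transform, using M\"obius inversion to strip the coprimality condition (the factor $1/\zeta(s-1)$) and reading off the dominant residue of $B_x(s)$ at $s=3$ via the change of variables $(p,q)\mapsto(w,r)$, before evaluating at the saddle point $t_n$. The paper never re-enters the Mellin machinery for this statement: it expands $\partial_u F_x|_{u=1}$ into the harmonic sum $\sum_n \frac{z^n}{1-z^n}\sum_{p+q=n,\,p\wedge q=1,\,q/p>x}p$, invokes an elementary equidistribution lemma asserting that the slope-restricted inner sum is asymptotically a constant multiple $c(x)$ of the unrestricted sum $\frac{n\varphi(n)}{2}$, and then scales the already-established endpoint expectation $n/2$ of Lemma~\ref{end} by $c(x)$ through Lemma~\ref{useful}. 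The paper's comparison argument is shorter and inherits the saddle-point transfer from Lemma~\ref{end}; yours is more self-contained but, as you note, must re-justify H-admissibility for $G_x$. Your closing discussion of variance bounds is not needed for this proposition, which concerns only the mean; the paper defers concentration to a separate (unproved) remark before the limit-shape theorem.

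Your computation also surfaces a real discrepancy which you should state as a finding rather than a hedge. Marking the horizontal steps ($u^{p}$), as both you and the paper's $F_x$ do, gives the limit value $\frac{1}{2(1+x)^2}$; the stated value $\frac12\bigl(1-(x/(1+x))^2\bigr)=\frac{1+2x}{2(1+x)^2}$ is your companion computation with $q$ in the numerator, i.e.\ the accumulated \emph{vertical} displacement. A direct check of the paper's number-theoretic lemma confirms this: $\sum_{p+q=n,\,p\wedge q=1,\,q/p>x}p$ is the sum of the $p<n/(1+x)$ coprime to $n$, which is $\sim \frac{n\varphi(n)}{2(1+x)^2}$ by equidistribution of coprime residues, whereas the value the paper assigns to it is that of $\sum q$ (for instance $n=101$, $x=1$ gives $1275$ against the lemma's prediction of about $3787$). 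So under the paper's own coding ($0$ = east step, contributing to $p$), the quantity in the proposition is the ordinate, not the abscissa, of the point of slope $x$. On the curve $f(z)=\sqrt{2z}-z$ the point of slope $x$ has abscissa $\frac{1}{2(1+x)^2}$ and ordinate $\frac{1+2x}{2(1+x)^2}$, so the final limit-shape theorem is unaffected; but the proposition and its supporting lemma have $p$ and $q$ (equivalently abscissa and ordinate) interchanged, and your two residue computations together are exactly what is needed to repair and complete the argument.
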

\begin{proof}
From lemma \ref{end} we deduce:
$$\dfrac{\partial (1-zv)S(z,h,v)}{\partial h}|_{h=1,v=1}=\dfrac{z}{2}\dfrac{\partial (1-zv)S(z,h,v)}{\partial z}|_{h=1,v=1},$$
Additionally, expanding of the derivative, we have 
$$\dfrac{\partial (1-zv)S(z,h,v)}{\partial h}|_{h=1,v=1}=(1-zv)S(z,h,v)|_{h=1,v=1}\sum\limits_{n=2}^{\infty}\sum\limits_{p+q=n, p\wedge q=1}\dfrac{pz^n}{1-z^n}.$$

So, we deduce from this the following technical lemma which will be useful in the sequel:
\begin{lemma}\label{useful} The following identity holds:
$$\sum\limits_{n=2}^{\infty}\dfrac{z^n}{1-z^n}\sum\limits_{p+q=n, p\wedge q=1}p=\sum\limits_{n=2}^{\infty}\dfrac{n\varphi(n)z^n}{2(1-z^n)}=\dfrac{z}{2}\dfrac{\dfrac{\partial (1-zv)S(z,h,v)}{\partial h}|_{h=1,v=1}}{(1-zv)S(z,h,v)|_{h=1,v=1}}.$$
\end{lemma}

Let us continue with the average position of the point of slope $x$ for $0\leq x <\infty$. The position of the ending point, previously done, is a special case of this question (where $x=0$).
For that purpose, let us consider the generating function $$F_x(z,u)=(1-z)^{-1}\prod\limits_{n=2}^{\infty}\prod\limits_{p+q=n, p\wedge q=1, \frac{q}{p}>x}(1-z^{n}u^p)^{-1}\prod\limits_{p+q=n, p\wedge q=1, \frac{q}{p}\leq x}(1-z^{n})^{-1}.$$ The average abscissa of the point of slope $x$ in a NW-convex path of size $n$ is nothing but the expectation $\dfrac{[z^n]\dfrac{\partial F_x(z,u)}{\partial u}|_{u=1}}{[z^n]F_x(z,1)}.$ Expanding the derivative, we get:
$$\dfrac{\partial F_x(z,u)}{\partial u}|_{u=1}=F_x(z,u)|_{u=1}\sum\limits_{n=2}^{\infty}\dfrac{z^n}{1-z^n}\sum\limits_{p+q=n, p\wedge q=1,\frac{q}{p}>x}p.$$
Now, we need the following short number theory lemma:
\begin{lemma} The following equivalence holds for every fixed $x\in [0,\infty[$:\\ 
$\sum\limits_{p+q=n, p\wedge q=1,\frac{q}{p}>x}p=\dfrac1{2}\left(1-\left(\dfrac{x}{1+x}\right)^2\right)n\varphi(n)(1+\varepsilon(n))$ with $\varepsilon(n)\rightarrow 0$ as $n$ tends to $\infty$.
\end{lemma}
The proof is quite immediate. 

Finally, using the lemma~\ref{useful}, we get proposition \ref{prop:abs}.
\end{proof}

At this stage, to prove that the limit shape is deterministic, we need to show that the standard deviation of the abscissa is in $o(n)$. 
This proof is long and technical, but follows the same way that we do for the expectation.
We conclude by solving the differential equation:
 $\left\{ f \left( 0 \right) =0,2f \left( z \right) =1-{\frac {
 \left( {\frac {d}{dz}}f \left( z \right)  \right) ^{2}}{ \left( 1+{
\frac {d}{dz}}f \left( z \right)  \right) ^{2}}} \right\} 
$ which explains the fact that the slope of $f(z)$ is $x$ at the abscissa $\dfrac1{2}\left(1-\left(\dfrac{x}{1+x}\right)^2\right).$ 
Consequently, we have:

\begin{theorem}
The limit shape for the renormalized by $1/n$ NW-convex path of size $n$ as $n$ tends to the infinity is the curve of equation $f(z) = \sqrt{2z}-z$.
\end{theorem}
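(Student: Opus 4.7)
The plan is to combine Proposition \ref{prop:abs} with the concentration claim (that the standard deviation of the position coordinate is $o(n)$) to promote convergence in expectation to convergence in probability of the renormalized path, and then identify the limit curve as the unique solution of the Cauchy problem written just before the theorem statement.

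First, I would upgrade the expectation result of Proposition \ref{prop:abs} to convergence in probability: Chebyshev's inequality applied to the $o(n^{2})$ variance bound gives, for each fixed slope $x\geq0$, convergence in probability of the appropriate renormalized coordinate at the point of slope $x$ to $\tfrac{1}{2}(1-(x/(1+x))^{2})$. Because a NW-convex path is coordinate-monotone (all its steps are north or east), pointwise control along a countable dense set of slopes can be upgraded, by a standard monotone sandwiching argument, to uniform convergence on $[0,1/2]$ of the renormalized random path to a deterministic, monotone, concave curve $z\mapsto f(z)$.

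Second, I would solve the Cauchy problem
\begin{equation*}
f(0)=0,\qquad 2f(z)=1-\Bigl(\frac{f'(z)}{1+f'(z)}\Bigr)^{2}
\end{equation*}
derived in the preceding paragraph. The substitution $g(z):=\sqrt{1-2f(z)}$, with $g(0)=1$, linearizes the equation to the separable form $(1-g)\,dg=-dz$. Integrating and using the initial condition yields $(g-1)^{2}=2z$, so the monotone-decreasing branch is $g(z)=1-\sqrt{2z}$ on $[0,1/2]$. Inverting the substitution gives
\begin{equation*}
f(z)=\tfrac{1}{2}\bigl(1-g(z)^{2}\bigr)=\sqrt{2z}-z,
\end{equation*}
which is the claimed limit shape. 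Uniqueness in the class of non-decreasing concave functions on $[0,1/2]$ with $f(0)=0$ follows from Picard--Lindel\"of applied to the regular ODE satisfied by $g$.

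The main obstacle in this program is the concentration step: obtaining the $o(n^{2})$ variance bound for the random coordinate at slope $x$, uniformly enough in $x$ that a tightness argument gives functional (uniform) convergence of the whole random curve, requires second-moment computations in the Boltzmann model that parallel the first-moment analysis (Mellin transform of $\ln S$, saddle-point asymptotics) but applied to partial derivatives of order two of the trivariate generating function $S(z,h,v)$, with careful uniform control of the error terms across the range of slopes.
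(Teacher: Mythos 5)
Your proposal follows essentially the same route as the paper: it rests on Proposition \ref{prop:abs} for the expected abscissa at slope $x$, invokes the (unproven in the paper as well) $o(n)$ concentration of that abscissa, and then identifies the limit curve by solving the Cauchy problem $f(0)=0$, $2f(z)=1-\bigl(f'(z)/(1+f'(z))\bigr)^{2}$, whose solution $\sqrt{2z}-z$ you verify correctly via the substitution $g=\sqrt{1-2f}$. You actually supply more detail than the paper does (Chebyshev, the monotone sandwiching to get uniform convergence, the explicit integration of the ODE), and you correctly flag the second-moment estimate as the step the paper leaves unproved.
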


\section{Boltzmann samplers for NW-convex paths}
\label{section3}

Boltzmann samplers have been introduced in 2004 by Duchon \textit{et
  al.} \cite{DuFlLoSc04} as a general framework to generate uniformly at random
specifiable combinatorial structures. These samplers are very
efficient, their expected complexity is typically linear. 
In comparison with the recursive method of sampling, the
principle of Boltzmann samplers essentially deals with evaluations
of the generating function of the structures, and avoids counting coefficient (which need to be pre-computed in the recursive
method). Quite a few papers have been written to extend and
optimize Boltzmann samplers
\cite{PiSaSo08,BoJa08,BoRoSo,BoPo10,BoFuKaVi,BoGaRo10,BoLu12}.

Consequently, we choose Boltzmann sampling for the class of digitally
convex polyominoes. After a short introduction to the method, we
analyze the complexity of the sampler. This part is based on an
analysis by Mellin transform techniques.

\subsection{A short introduction to Boltzmann samplers}
Let us recall the definitions and the main ideas of Boltzmann sampling.
\begin{definition}
Let $\cA$ be a combinatorial class and $A(x)$ its ordinary
generating function. A \emph{(free) Boltzmann sampler} with
parameter $x$ for the class $\cA$ is a random algorithm
$\Gamma_x\cA$ that draws each object $\gamma \in \cA$ with
probability:
\[\mathbb{P}_x (\gamma )=\frac{x^{|\gamma |}}{A(x)}.\]
Notice that this definition is consistent only if $x$ is a
positive real number taken within the disk of convergence of the
series $A(x)$.
\end{definition}

The great advantage of choosing the Boltzmann distribution for the
output size is to allow simple and automatic rules to design Boltzmann
samplers from a specification of the class. 

Note that from free Boltzmann samplers, we easily define two
variants: the \emph{exact-size Boltzmann sampler} and the
\emph{approximate-size} one, just by rejecting the inappropriate
outputs until we obtain respectively a targeted size or a targeted
interval of type $[(1-\eps ) n,(1+\eps ) n]$. In order to optimize
this rejection phase, it is crucial to tune efficiently the
parameter $x$. A good choice is generally to take the unique
positive real solution $x_n$ (or an approximation of it when $n$
tends to infinity) of the equation $\dfrac{xA'(x)}{A(x)}=n$ which
means that the expected size of the output tuned by $x_n$
equals $n$.

To conclude, let us recall that authors of the seminal paper \cite{DuFlLoSc04}
distinguished a special case where Boltzmann samplers are particularly
efficient (and we prove in the sequel, that we are in this
situation). This case arises when the Boltzmann distribution of the
output is \emph{bumpy}, that is to say, when the following conditions
are satisfied:
\vspace{-0.3cm}
\begin{itemize}
\item $\mu_1(x) \rightarrow \infty$ when $x \rightarrow \rho^{-}$
\item $\sigma(x)/\mu_1(x) \rightarrow 0$ when $x \rightarrow \rho^{-},$
\end{itemize}
where $\mu_1(x)$ (resp.  $\sigma^2(x)$) is the expected size (resp. the variance) of the output, and $\rho$ is the dominant singularity of $A(x)$.
\subsection{The class of the digitally convex polyominoes}
Let us recall that the digitally convex polyominoes can be
decomposed into four NW-convex paths, each of them being a multiset of
discrete irreducible segments. Moreover, according to the previous
section, we have a specification for the discrete irreducible
segments in terms of word theory. This brings us the generating
function associated to a NW-convex path:
\[S(z)=\prod\limits_{n=1}^{\infty} (1-z^{n})^{-\varphi(n)},\]
where $\varphi(n)$ designs the Euler's totient function.

The first question that occurs concerns the determination of the
parameter $x_n$ which is a central point for the tuning of the
sampler. In order to approximate $x_n$ as $n$ tends to infinity,
we need to apply the asymptotic of $S(z)$ as $z \rightarrow 1$, we already calculate for the asymptotic of the NW-convex paths.

\subsection{The Boltzmann distribution of the NW-convex paths}
The first step to analyze the complexity of exact- and
approximate-size Boltzmann sampler is to characterize the type of
the Boltzmann distribution. In this subsection we prove that the Boltzmann distribution
is bumpy. This ensures that we only need on average a constant
number of trials to draw an object of approximate-size. Moreover,
a precise analysis allows us to give the complexity of the
exact-size sampling.

Firstly, we derive from the equivalence of $S(z)$ close to its dominant singularity $\rho=1$, an expression for the tuning of the Boltzmann
parameter:

\begin{corollary}
\label{corgoodchoice}
A good choice for the Boltzmann parameter $x_n$ in
order to draw a large NW-convex paths of size $n$ is
$x_n=1-\sqrt [3]{{\dfrac {12\zeta  \left( 3 \right) }{n{\pi }^{2}
}}}.$

\end{corollary}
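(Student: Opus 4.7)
The plan is to apply the general tuning heuristic recalled earlier: we want $x_n$ to be the positive real solution of $xS'(x)/S(x)=n$, which forces the expected size of the Boltzmann output to equal $n$. Since $S(z)$ has dominant singularity at $\rho=1$, for large $n$ the saddle $x_n$ approaches $1$, so we only need a sufficiently precise asymptotic expansion of $\ln S(z)$ near $z=1$, which is already furnished by Proposition~\ref{prop:equiv}.

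First I would take logarithms of the equivalence in Proposition~\ref{prop:equiv} and differentiate with respect to $z$. The dominant contribution comes from the factor $\exp\!\bigl(6z\zeta(3)/(\pi^2(1-z)^2)\bigr)$, whose logarithmic derivative blows up like $12\zeta(3)/(\pi^2(1-z)^3)$ as $z\to1^-$; the remaining factors contribute only lower-order terms ($O((1-z)^{-1})$ from the algebraic prefactor, $O((1-z)^{-\theta-1})$ from $g(1-z)$, and a constant from the exponential of constants). Consequently
\begin{equation*}
\frac{zS'(z)}{S(z)} \;\sim\; \frac{12\,\zeta(3)}{\pi^{2}(1-z)^{3}} \qquad (z\to 1^-).
\end{equation*}

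Then I would equate this dominant expression with $n$ and invert: $(1-x_n)^3 \sim 12\zeta(3)/(n\pi^2)$, so
\begin{equation*}
x_n \;=\; 1-\sqrt[3]{\frac{12\,\zeta(3)}{n\pi^{2}}} \;+\; o\!\bigl(n^{-1/3}\bigr),
\end{equation*}
which is exactly the proposed choice. To justify that keeping only the leading singular term of $S'(z)/S(z)$ suffices, I would check that the subdominant pieces (coming from $-\tfrac16\ln(2\pi(1-z))$ and from $g(1-z)/(2\pi^2)$) contribute $O((1-z)^{-1})$ and $O((1-z)^{-\theta-1})$ respectively to the logarithmic derivative, which are negligible compared to $(1-z)^{-3}$ near $z=1$; in particular, using the bound $g(t)=O(t^{-\theta})$ with $\theta<1$ proved in the previous remark, these corrections translate into a relative error $o(1)$ in the equation $xS'(x)/S(x)=n$.

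The main obstacle is not the algebra but making rigorous the differentiation of the asymptotic equivalence $S(z)\sim\ldots$: an equivalence on $S$ alone does not automatically transfer to $S'$. I would handle this by differentiating the harmonic-sum identity $\ln S(e^{-t}) = \sum_{k\ge1}k^{-1} B(kt)$ with respect to $t$ and reapplying the Mellin machinery from Lemma~\ref{pouranne}, so that the same shifted-contour argument of Proposition~\ref{prop:equiv} yields the desired singular behaviour of $(\ln S)'$ termwise. With that ingredient in hand the corollary follows directly from solving the leading asymptotic saddle equation.
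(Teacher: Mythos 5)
Your proposal is correct and follows essentially the same route as the paper: solve the tuning equation $xS'(x)/S(x)=n$ by extracting the dominant term $12\zeta(3)/(\pi^2(1-x)^3)$ from the logarithmic derivative of the equivalence in Proposition~\ref{prop:equiv} and inverting. The extra care you take to justify differentiating the asymptotic equivalence (by re-running the Mellin argument on $(\ln S)'$) is a refinement the paper's one-line proof passes over silently, but it does not change the argument.
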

\begin{proof}
The expected size of the output is $\dfrac{xS'(x)}{S(x)}$ which is an increasing function in $x$. Using the equivalent of $S(x)$ when $x$ tends to 1, we can approximate the first member of the equation $\dfrac{xS'(x)}{S(x)}=n$ to obtain $\dfrac {12\zeta  \left( 3 \right) }{ \left( 1-x \right) ^{3}{\pi }^{
2}}=n.$ The result ensues immediately.
\end{proof}

So, the first
condition for the bumpy distribution is clearly verified. We now
focus our attention on the fraction $\sigma(x)/\mu_1(x)$.

\begin{lemma} \label{lem:bumpy}
The expected size of the Boltzmann output satisfies:
$$\mu_1(x)\sim \dfrac {12\zeta  \left( 3 \right) }{ \left( 1-x
\right) ^{3}{\pi }^{ 2}} \mbox{ as } x \mbox{ tends to 1.}$$ The
variance of the size of the Boltzmann output satisfies:
$$\sigma(x)\sim \frac{6\,\sqrt {\zeta  \left( 3 \right)  {
{x}}}}{\pi{ \left( 1-x
 \right) ^{2}} }
\mbox{ as } x \mbox{ tends to 1.}$$
So, the Boltzmann distribution of the NW-convex paths is bumpy.
\end{lemma}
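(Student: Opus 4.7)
The plan is to reduce both $\mu_1(x)$ and $\sigma^2(x)$ to derivatives of $\ln S(x)$, then feed them with the singular expansion of $S(z)$ already derived in Proposition \ref{prop:equiv}. For a Boltzmann sampler with size $N$, standard manipulations give
\begin{equation*}
\mu_1(x)=x\frac{d}{dx}\ln S(x),\qquad \sigma^2(x)=x\frac{d}{dx}\mu_1(x).
\end{equation*}
Setting $x=e^{-t}$, the operator $x\tfrac{d}{dx}$ becomes $-\tfrac{d}{dt}$, so everything reduces to differentiating $\ln S(e^{-t})$ twice with respect to $t$ and then converting $t\sim 1-x$.

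From equation (\ref{lns}) in the proof of Proposition \ref{prop:equiv}, we have
\begin{equation*}
\ln S(e^{-t})=\frac{6\zeta(3)}{\pi^{2}t^{2}}-\frac{1}{6}\ln t+g(t)+C+o(1),\qquad t\to 0^{+},
\end{equation*}
with $C$ an explicit constant and $g(t)=O(t^{-\theta})$, $\theta<1$. Differentiating once formally gives a dominant $12\zeta(3)/(\pi^{2}t^{3})$, and differentiating again gives $36\zeta(3)/(\pi^{2}t^{4})$; translating back through $t\sim 1-x$ and multiplying by the prefactor $x\to 1$ yields exactly the two equivalences in the statement, and consequently
\begin{equation*}
\frac{\sigma(x)}{\mu_{1}(x)}\sim\frac{\pi(1-x)}{2\sqrt{\zeta(3)}}\longrightarrow 0,
\end{equation*}
which together with $\mu_{1}(x)\to\infty$ establishes bumpiness.

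The only delicate point is justifying term-by-term differentiation of an asymptotic expansion, especially of the oscillatory-error contribution $g(t)$. Rather than arguing heuristically, I would reproduce the Mellin analysis directly for the derivative $h(t):=-\tfrac{d}{dt}\ln S(e^{-t})$. An integration by parts (the boundary terms vanish thanks to the exponential decay at $+\infty$ and the mild blow-up at $0^{+}$ for $\mathrm{Re}(s)$ in a suitable strip) gives
\begin{equation*}
\cM[h](s)=(s-1)\,\cM[\ln S(e^{-t})](s-1)=\frac{\zeta(s)\,\zeta(s-2)\,\Gamma(s)}{\zeta(s-1)},
\end{equation*}
whose rightmost pole is at $s=3$ with residue $12\zeta(3)/\pi^{2}$, yielding $\mu_{1}$. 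Applying the same procedure once more to $-h'(t)$ shifts the fundamental strip by one unit further and produces a rightmost pole at $s=4$ with residue $36\zeta(3)/\pi^{2}$, delivering $\sigma^{2}$. In both cases the error bounds obtained in Proposition \ref{prop:equiv} on the remaining line integral transfer unchanged, so the contribution of $g$ and its derivatives is absorbed in a $O(t^{-\theta-k})$ error, which is negligible compared to $t^{-3}$ and $t^{-4}$ respectively since $\theta<1$.

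The main obstacle, as indicated above, is precisely this legitimization of the derivatives; once the Mellin computation for $h$ and $h'$ is in place, the asymptotic equivalences for $\mu_{1}$ and $\sigma$ fall out, and the bumpy-regime verification is the short final calculation displayed above.
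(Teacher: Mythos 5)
Your argument is correct, and it supplies what the paper actually leaves out: Lemma~\ref{lem:bumpy} is stated with no proof at all, the authors evidently regarding it as a routine consequence of the singular expansion of $S(z)$ (the same differentiation of the equivalent is used, without justification, in the proof of Corollary~\ref{corgoodchoice}). Your route --- the standard Boltzmann identities $\mu_1=x\frac{d}{dx}\ln S$ and $\sigma^2=x\frac{d}{dx}\mu_1$, combined with a fresh Mellin analysis of $-\frac{d}{dt}\ln S(e^{-t})$ whose transform $\zeta(s)\zeta(s-2)\Gamma(s)/\zeta(s-1)$ has its rightmost pole at $s=3$ with residue $12\zeta(3)/\pi^2$, and likewise at $s=4$ with residue $36\zeta(3)/\pi^2$ for the second derivative --- is exactly the intended computation, with the added merit of legitimizing the term-by-term differentiation that the paper takes for granted; the resulting equivalences and the ratio $\sigma/\mu_1\sim\pi(1-x)/(2\sqrt{\zeta(3)})\to 0$ all check out.
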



Finally, we
describe the sampler for digitally convex polyominoes. This needs
some care during the stage when we glue the NW-convex paths together.

\subsection{Random sampler to draw Christoffel primitive words}
We now look more precisely how to implement the samplers.
 Firstly,
we address the question  of drawing two coprime integers which
is non-classical in Boltzmann sampling, from which we derive a Boltzmann sampler for NW-convex paths.
The first step to generate NW-convex paths is to draw Christoffel primitive words with Boltzmann probability.
We recall that this is equivalent to draw two coprime integers $p,q$ with probability $\frac{x^{p+q}}{\sum_{n\geq 1}\varphi(n) x^n}$.

Let $b(x,n):=\dfrac{\varphi (n)
x^{n}}{\sum\limits_{n=1}^{\infty}{\varphi (n) x^n}}$. The
following algorithm is an elementary way to answer the question
posed above:

\begin{algorithm}[H]
\caption{$\Gamma_{CP} $}
\label{gammaCP}
\KwIn{a parameter $x$}
\KwOut{Two coprime integers}

Draw $n$ with Boltzmann probability $b(x,n)$\\

\textbf{Do} Draw $p$ uniformly in $\{1,...,n\}$\\
\textbf{While} $p, n$ not coprime\\
\Return $(p, n-p)$\\
\end{algorithm}
The average complexity of the algorithm is in $O(n\ln \ln
(n))$. 

 \subsubsection{Correctness}
$p \wedge n =1 \Leftrightarrow p \wedge n-p=1$, so drawing $p$ coprime with $n$ uniformly in $\{1,..,n\}$ is equivalent to draw uniformly $p,q$ coprime with $p+q=n$.
With the choice of $n$ with the probability $\frac{\varphi (n) x^{n}}{\sum\limits_{n=1}^{\infty}{\varphi (n) x^n}}$, we obtain a Boltzmann sampler for Christoffel primitive words.
\subsubsection{Complexity}
To evaluate the complexity of this sampler there are two steps
of the algorithm that should be analyzed: the drawing of $n$
with probability $b(x,n)$ and the drawing of two coprime numbers.
The drawing of $n$ is essentially related to the generating
function evaluation and to $\varphi(k)$ for all $k\leq n$. In
the following, we consider that both the generating function and a
table for large enough $\varphi(n)$ are to be precomputed. So,
the complexity of the drawing of $n$ is in $O(n)$. Experimentally,
the precomputation takes a couple of minutes on a standard
personal computer. Now let us evaluate the complexity of drawing
the coprime numbers. The probability for $p$ to be taken
uniformly from $\{1,..,n\}$ being also coprime with $n$ is
$\varphi(n)/n$
and the following classical inequality \cite[Thm 8.8.7]{BachShallit} on totient's function:\\
 $\dfrac{\varphi(n)}{n}>\dfrac1{e^\gamma\log \log
(n)+\dfrac{3}{\ln \ln (n)}}$ (valid for all positive integers
$n$) proves
that on average the number of trials in the loop is $O(\log \log (n))$.\\
\begin{remark}
In fact, we can expect a better complexity. Indeed, with Boltzmann probability, we have better chance to draw an integer $n$ such that $\varphi(n)$ is close to $n$ and in this case the probability to draw a coprime $p$ is positive.
\end{remark}

%
%
%
%


\subsection{Random sampler drawing a NW-convex path}

To draw a NW-convex path, we use the isomorphism between NW-convex paths and multisets of Christoffel primitive words. 
The multiset is a classical constructor, for which its Boltzmann sampler in the unlabelled case had been given in \cite{FlaFuPi07}.
The idea is to draw with an appropriate distribution (called IndexMax distribution) an integer $M$, 
and then draw a random number of Christoffel primitive words with a Boltzmann sampler of parameter $x^i$ and to 
replicate each drawn object $i$ times, for all $1\leq i\leq M$. Well chosen probabilities ensures the Boltzmann distribution.

Once we get a multiset of pairs of coprime integers, we can transform it into a NW-convex path coded on $\{0,1\}$ as follows:
\begin{itemize}
\item Draw a multiset $m$ in $\MSet(PC)$,
\item Sort the elements $(p,q)$ of $m$ in decreasing order according to $q/p$,
\item Transform each element into the discrete line of slope $q/p$ coded on $\{0,1\}$,
\item add a $1$ at the beginning.
\end{itemize}
Clearly, this transformation has a complexity in $O(n\ln(n))$, due to the sorting.

\subsection{Complexity of sampling a NW-convex path in approximate and exact size}

The previous sections bring all needed elements to determine the complexity of the Boltzmann sampler for a NW-convex path. 
The two following theorems respectively tackle the complexity of the sampling in the case of approximate-size output or exact size output.

\begin{theorem}
An approximate-size Boltzmann sampler for NW-convex paths succeeds in one trial with probability tending to 1 as $n$ tends to infinity. 
The overall cost of approximate-size sampling is $O(n\ln(n))$ on average.
\end{theorem}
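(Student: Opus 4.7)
The plan is to combine the bumpy character of the Boltzmann distribution established in Lemma~\ref{lem:bumpy} with Chebyshev's inequality to obtain the one-trial-success statement, and then to read off the per-trial cost from the three constituent pieces of the sampler (drawing of Christoffel primitive pieces, multiset assembly, and sorting by slope).

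First I would set the tuning parameter to the value $x_n$ given by Corollary~\ref{corgoodchoice}, so that $1-x_n = \Theta(n^{-1/3})$ and $\mu_1(x_n)\sim n$. Let $N$ denote the random size of the free Boltzmann output with this parameter. From Lemma~\ref{lem:bumpy},
\[
\frac{\sigma(x_n)^2}{\mu_1(x_n)^2}\;\sim\;\frac{\pi^2 x_n (1-x_n)^2}{4\zeta(3)}\;=\;O\bigl(n^{-2/3}\bigr),
\]
so $\sigma(x_n) = O(n^{2/3})$. Chebyshev's inequality then yields
\[
\mathbb{P}\bigl(|N-n|\ge \eps n\bigr)\;\le\;\frac{\sigma(x_n)^2}{(\eps n - o(n))^2}\;=\;O\bigl(n^{-2/3}\bigr)\;\longrightarrow\;0,
\]
which is exactly the statement that a single trial lands in $[(1-\eps)n,(1+\eps)n]$ with probability tending to $1$.

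Next I would bound the expected cost of a single trial. A trial consists of three phases: (i) drawing a multiset of pairs of coprime integers via the unlabelled \MSet{} construction of \cite{FlaFuPi07} feeding on Algorithm~\ref{gammaCP}; (ii) sorting the drawn pairs by decreasing slope; and (iii) writing out the corresponding $\{0,1\}$ word. By the analysis of Algorithm~\ref{gammaCP}, each Christoffel piece of length $k$ costs $O(k\log\log k)$ in expectation, and summing over pieces the total generation cost is $O(N\log\log N)$. The sorting phase, applied to at most $N$ pairs, costs $O(N\log N)$, and the emission phase is linear in $N$. Hence conditional on $N$, a trial costs $O(N\log N)$, and taking expectation against the Boltzmann law with parameter $x_n$ yields $O(n\log n)$, since $\mathbb{E}[N]\sim n$ and higher moments of $N$ are controlled by the bumpy bound above.

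Finally I combine the two estimates: since each trial succeeds independently with probability $1-o(1)$, the expected number of trials is $1+o(1)$, and the total expected cost of the approximate-size sampler is $(1+o(1))\cdot O(n\log n)=O(n\log n)$. The main obstacle in the argument is ensuring that the expected cost is not dominated by rare trials that produce an enormous output (since a single very large draw could cost far more than $n\log n$); this is handled by using the bumpy bound to give a polynomial tail on $N$, which, combined with the fact that the per-trial cost is only $O(N\log N)$, keeps $\mathbb{E}[N\log N \cdot \mathbf{1}_{\text{fail}}]$ negligible compared to $n\log n$.
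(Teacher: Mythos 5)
Your proposal is correct and follows essentially the same route the paper intends: the paper does not write out an explicit proof of this theorem, but it is meant to follow from Lemma~\ref{lem:bumpy} (the bumpy distribution, giving one successful trial with probability $1-o(1)$ via exactly the Chebyshev argument you use with the tuning of Corollary~\ref{corgoodchoice}) together with the per-trial cost already stated in the text, namely $O(n\ln\ln n)$ for generating the Christoffel pieces and $O(n\ln n)$ for the sort. Your added remark about controlling $\mathbb{E}[N\ln N]$ against rare large outputs is a legitimate point the paper glosses over, and your moment-based handling of it is adequate.
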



\begin{theorem}
\label{theo3}
An exact-size Boltzmann sampler for NW-convex paths succeeds in mean number of trials tending to $\kappa\cdot n^{2/3}$
 with $\kappa={\dfrac {\sqrt [6]{2}\sqrt [3]{3}{\pi }^{5/6}}{\sqrt [6]{\zeta  \left(
3 \right) }}}\approx 4.075517917...$
as $n$ tends to infinity. The overall cost of exact-size sampling is $O(n^{5/3}\ln(\ln(n)))$ on average.
\end{theorem}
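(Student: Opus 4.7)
The plan is to reduce Theorem~\ref{theo3} to two ingredients that are essentially already in place: a Gaussian local limit theorem for the Boltzmann distribution of $S$ at the saddle point $x_n$, and the per-trial cost of the NW-convex path sampler. Since the exact-size procedure repeatedly calls the free Boltzmann sampler at the parameter $x_n$ of Corollary~\ref{corgoodchoice} and rejects every output whose size differs from $n$, the mean number of trials equals
\[
T(n)\;=\;\frac{1}{\mathbb{P}_{x_n}(|\gamma|=n)}\;=\;\frac{S(x_n)}{p_{NW}(n)\,x_n^{n}}.
\]

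The first and main step is to establish the Gaussian local limit estimate
\[
\mathbb{P}_{x_n}(|\gamma|=n)\;\sim\;\frac{1}{\sigma(x_n)\sqrt{2\pi}}.
\]
This should follow from the saddle point machinery already used to prove the asymptotics of $p_{NW}(n)$: the decay property~(\ref{decay}) localises the Cauchy integral for $[z^n]S(z)$ to a central arc $|\theta|\le t_n$, on which a Taylor expansion of $\ln S(x_n e^{i\theta})$ to quadratic order reduces the surviving integral to a standard Gaussian with variance $\sigma^2(x_n)$. The hard part will be to control this expansion uniformly enough that the exact Gaussian constant $1/\sqrt{2\pi}$ appears (rather than just a $\Theta$-bound); this is the H-admissibility verification sketched in the proof of Proposition~\ref{prop:equiv}.

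Given the local limit theorem, the mean number of trials is $T(n)\sim \sqrt{2\pi}\,\sigma(x_n)$. Substituting $\sigma(x_n)\sim \frac{6\sqrt{\zeta(3)}}{\pi(1-x_n)^{2}}$ from Lemma~\ref{lem:bumpy} together with $1-x_n=\bigl(12\zeta(3)/(n\pi^{2})\bigr)^{1/3}$ from Corollary~\ref{corgoodchoice} gives
\[
T(n)\;\sim\;\frac{6\sqrt{2}\,\pi^{5/6}}{12^{2/3}\,\zeta(3)^{1/6}}\,n^{2/3}\;=\;\frac{\sqrt[6]{2}\,\sqrt[3]{3}\,\pi^{5/6}}{\sqrt[6]{\zeta(3)}}\,n^{2/3},
\]
after collecting $12^{2/3}=2^{4/3}3^{2/3}$ in the denominator, which exposes the claimed value of $\kappa$.

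Finally, for the overall cost, each trial produces a multiset of Christoffel primitive words whose total size is concentrated around $n$ by bumpiness (Lemma~\ref{lem:bumpy}); using the $O(m\,\ln\ln m)$ complexity of Algorithm~\ref{gammaCP} for a coprime pair of sum $m$, the whole trial costs $O(n\,\ln\ln n)$ on average. Since the size of a sample is known from the multiset before any sorting, rejected trials are discarded without running the $O(n\ln n)$ sort, which is thus applied only to the single accepted output and absorbed. Multiplying the per-trial cost by $T(n)\sim\kappa n^{2/3}$ then yields the announced $O(n^{5/3}\ln\ln n)$ bound.
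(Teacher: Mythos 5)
The paper states Theorem~\ref{theo3} without any accompanying proof, so there is no argument of the authors' to compare against line by line; your proposal assembles exactly the ingredients the paper has already prepared for this purpose --- the Gaussian local limit estimate $\mathbb{P}_{x_n}(|\gamma|=n)\sim 1/(\sigma(x_n)\sqrt{2\pi})$ implicit in the H-admissibility verification carried out for the asymptotics of $p_{NW}(n)$, the saddle-point value of Corollary~\ref{corgoodchoice}, the standard deviation from Lemma~\ref{lem:bumpy}, and the per-trial cost of the sub-samplers --- and your algebra correctly recovers $\kappa=2^{1/6}3^{1/3}\pi^{5/6}\zeta(3)^{-1/6}\approx 4.0755$. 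This is evidently the intended proof, and your remark that the $O(n\ln n)$ sorting step need only be performed on the single accepted trial is not a cosmetic point but is actually required to obtain the stated $O(n^{5/3}\ln\ln n)$ bound rather than $O(n^{5/3}\ln n)$.
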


\begin{remark}
Since $\varphi(n)$ grows slowly, the parameter $x_n$ tuned to draw large objects will be close to $1$, which gives big replication orders.
A 
 consequence
is that we do not need to calculate the generating function of primitive Christoffel words to a huge order to have a good approximation of our probabilities.
\end{remark}
\subsection{Random sampler drawing digitally convex polyominoes}
We can now sample independent NW-convex paths with Boltzmann probability.
We want to obtain an entire polyomino by gluing four (rotated) NW-convex path. \\
However, gluing a 4-tuple of NW-convex paths, we do not necessarily obtain the contour of a polyomino.
 Indeed, we need the following extra conditions: the four NW-convex paths should be non-crossing, and they need to form a closed walk with no half-turn. \\

\begin{algorithm}[!h]
\caption{$\Gamma_{P} $}
\label{gammaP}
\KwIn{a parameter $x$}
\KwOut{a quadruple of compatible NW-convex paths.}
\textbf{Repeat}:\\
\hspace{0.6cm}Draw WN, NE, ES, SW using independent calls to a Boltzmann sampler of\\
\hspace{0.6cm}NW-convex path of parameter $x$.\\
\hspace{0.6cm}\textbf{If} $|WN|_0 + |NE|_1=|ES|_0+|SW|_1$ and $|NE|_0+|ES|_1=|SW|_0+|WN|_1$ \\
\hspace{.6cm}\textbf{then} \Return $(WN, NE, ES, SW)$\\
\end{algorithm}

The non-crossing and no half-turn conditions are trivially satisfied when the four paths are NW-convex. 
Then the closing problem stays and we need to add a rejection phase at this step.
More precisely, to be closed, we need to have as much horizontal steps in the upper part from W to E as
 in the lower part from E to W, and as much vertical steps in the left part from S to N as in the right part from N to S. 
 A naive way to sample DCP according to a Boltzmann distribution is presented in Algorithm \ref{gammaP}. A more efficient uniform sampler can probably be adapted from \cite{BoGaRo10} and is currently under development.

\begin{figure}[!h]

\hspace{-0.45cm}
\includegraphics[scale=0.5]{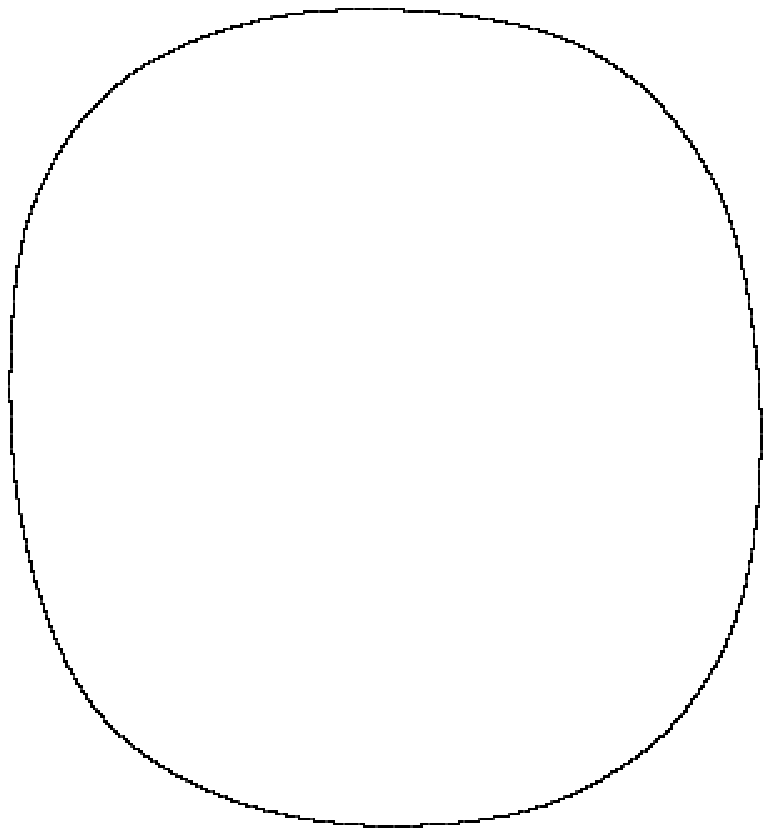}
\includegraphics[scale=0.32]{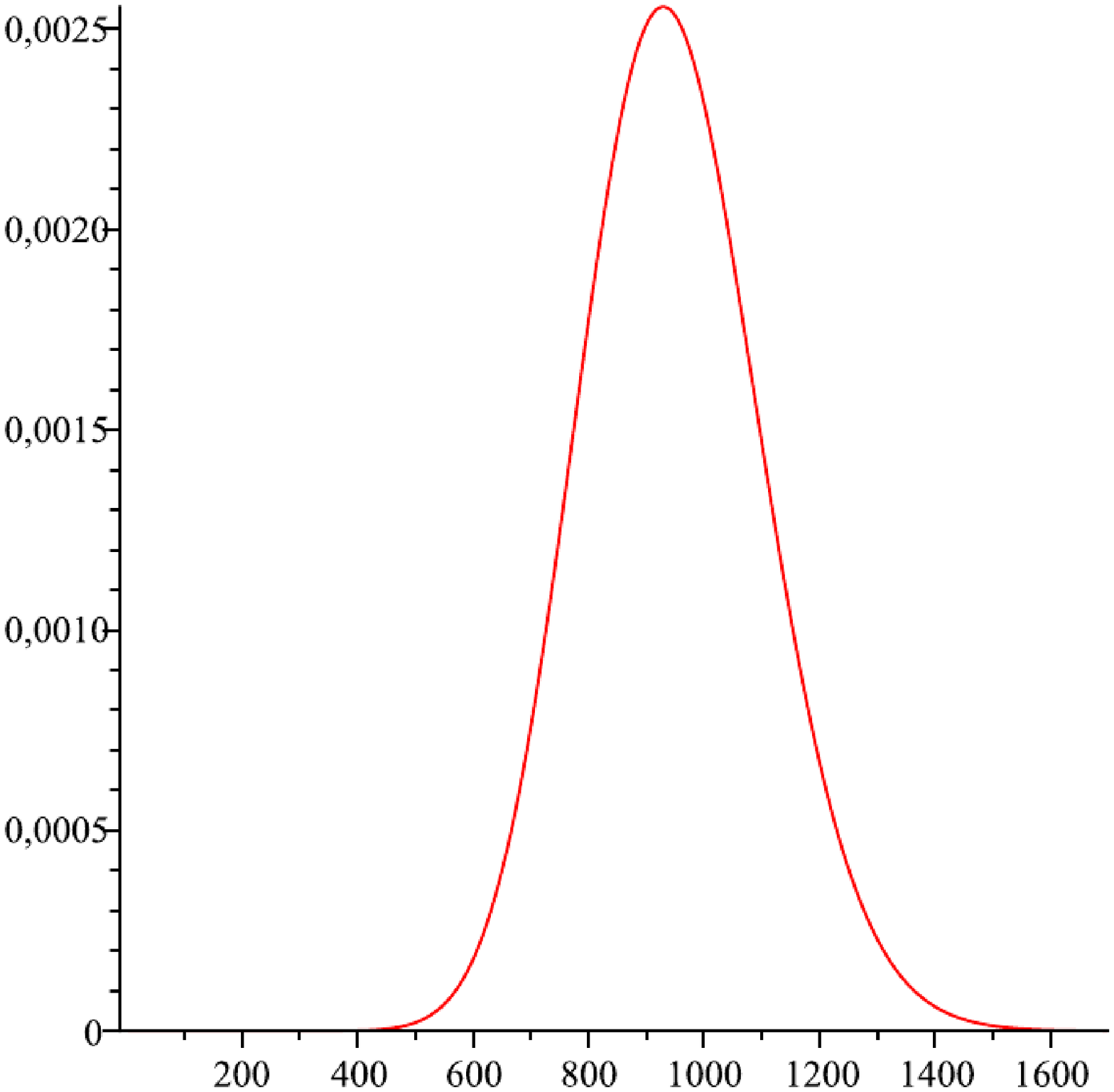}
\hspace{-0.35cm}
\caption{To the left, a random polyomino of perimeter 81109 drawn with parameter $x=0.98$. 
To the right perimeter distribution of a NW-convex path drawn with $x=0.8908086616$.}
\label{grosdessin}
\end{figure}
\section*{Conclusion}
We proposed in this paper an effective way to draw uniformly at random
digitally convex polyominoes. Our approach is based on Boltzmann
generators which allows us to build large digitally convex polyomioes.
These samples
point out the fact that random digitally convex polyominoes admit a
limit shape as their size tends to infinity. The limit shape of the NW-convex paths we proved in this paper seems to be also 
the limit shape for NW part of the digitally convex polyominoes. The tools to
tackle it are for the moment beyond our reach. 
Even, the simpler question of a precise asymptotic enumeration of the digitally convex polyominoes
 (the order of magnitude is proven \cite{IvKoZu94}) is currently a challenge. We conclude by noting that our
work could certainly be extended to higher dimensions. But, this is a
work ahead...
\subsection*{Acknowledgement}
The fourth author established the collaboration on this work during his visit at
LIPN, University of Paris 13. He wishes to thank for the fruitful discussions, hospitality and
support.
We thank Axel Bacher for its English corrections.
Olivier Bodini is supported by ANR project MAGNUM, ANR 2010 BLAN 0204.

\newpage

\bibliographystyle{plain}
\bibliography{digitconvsept}

\end{document}